\def\eqref#1{equation~\ref{#1}}
\def\1{\bm{1}}
\DeclareMathAlphabet{\mathsfit}{\encodingdefault}{\sfdefault}{m}{sl}
\SetMathAlphabet{\mathsfit}{bold}{\encodingdefault}{\sfdefault}{bx}{n}
\title{On the Expressiveness of State Space Models via Temporal Logics}
\author{Eric Alsmann, Lowejatan Noori \& Martin Lange 
 \\
Theoretical Computer Science / Formal Methods\\
University of Kassel, Germany \\
\texttt{\{eric.alsmann,l.noori,martin.lange\}@uni-kassel.de} \\
}
\newtheorem{theorem}{Theorem}
\newtheorem{lemma}[theorem]{Lemma}
\newtheorem{corollary}[theorem]{Corollary}
\newtheorem{conjecture}{Conjecture}
\theoremstyle{definition}
\newtheorem{example}{Example}
\newtheorem{definition}{Definition}
\newcommand*{\ints}{\mathbb{Z}}
\newcommand*{\reals}{\mathbb{R}}
\newcommand*{\bs}[1]{\boldsymbol{#1}}
\newcommand*{\relu}{\mathit{relu}}
\newcommand*{\emb}{\mathit{emb}}
\newcommand*{\out}{\mathit{out}}
\newcommand*{\gate}{\mathit{gate}}
\newcommand*{\inc}{\mathit{inc}}
\newcommand*{\LTLf}{\textsc{LTL}_f\xspace}
\newcommand*{\pLTLf}{\textsc{pLTL}_f\xspace}
\newcommand*{\cLTLf}{\textsc{pLTL}_f[$\backhash$]\xspace}
\newcommand*{\mLTLf}{\textsc{pLTL}_f[\MOD]\xspace}
\newcommand*{\uLTLf}{\textsc{un-pLTL}_f\xspace}
\newcommand{\yesterday}{\ensuremath{\mathop{\mathtt{Y}}}}
\newcommand{\ttrue}{\ensuremath{\mathop{\mathtt{t\!t}}}}
\newcommand{\ffalse}{\ensuremath{\mathop{\mathtt{f\!f}}}}
\newcommand{\since}{\ensuremath{\mathbin{\mathtt{S}}}}
\newcommand{\history}{\ensuremath{\mathop{\mathtt{H}}}}
\newcommand{\previously}{\ensuremath{\mathop{\mathtt{P}}}}
\newcommand{\backhash}{\ensuremath{\mathop{\overleftarrow{\mathtt{\#}}}}}
\newcommand{\fronthash}{\ensuremath{\mathop{\overrightarrow{\mathtt{\#}}}}}
\newcommand*{\nd}{\text{nd}\xspace}
\newcommand*{\Sub}{\text{Sub}\xspace}
\newcommand*{\MOD}{\mathtt{MOD}\xspace}
\newcommand{\bigcomp}{%
  \DOTSB
  \mathop{\vphantom{\sum}\mathpalette\bigcomp@\relax}%
  \slimits@
}
\newcommand{\bigcomp@}[2]{%
  \begingroup\m@th
  \sbox\z@{$#1\sum$}%
  \setlength{\unitlength}{0.9\dimexpr\ht\z@+\dp\z@}%
  \vcenter{\hbox{%
    \begin{picture}(1,1)
    \bigcomp@linethickness{#1}
    \put(0.5,0.5){\circle{1}}
    \end{picture}%
  }}%
  \endgroup
}
\newcommand{\bigcomp@linethickness}[1]{%
  \linethickness{%
      \ifx#1\displaystyle 2\fontdimen8\textfont\else
      \ifx#1\textstyle 1.65\fontdimen8\textfont\else
      \ifx#1\scriptstyle 1.65\fontdimen8\scriptfont\else
      1.65\fontdimen8\scriptscriptfont\fi\fi\fi 3
  }%
}
\begin{document}

\maketitle

\begin{abstract}
We investigate the expressive power of state space models (SSM), which have recently emerged as a potential alternative to transformer architectures in large language models. Building on recent work, we analyse SSM expressiveness through fragments and extensions of linear temporal logic over finite traces. Our results show that the expressive capabilities of SSM vary substantially depending on the underlying gating mechanism. We further distinguish between SSM operating over fixed-width arithmetic (quantised models), whose expressive power remains within regular languages, and SSM with unbounded precision, which can capture counting properties and non-regular languages. In addition, we provide a systematic comparison between these different SSM variants and known results on transformers, thereby clarifying how the two architectures relate in terms of expressive power.
\end{abstract}

\section{Introduction}

State Space Models (SSM) have emerged as a potential alternative to traditional sequence modelling architectures such as the popular transformer architecture. In this paper, we investigate the expressive capabilities of SSM by establishing lower bounds on their computational power. The analysis focuses on two key dimensions that influence expressiveness: the type of SSM layers used and the arithmetic precision employed in computations. We specifically examine diagonal-gated SSM like  S6, cf.\ \cite{guMambaLinearTimeSequence2024}, where gate matrices can depend on the input but must maintain a diagonal structure, and time-invariant SSM like S4, cf.\ \cite{guEfficientlyModelingLong2022}, where gate matrices remain constant regardless of input. For each model variant, we consider both fixed-width arithmetic, with a constant number of bits for all computations, and log-precision arithmetic, where precision scales logarithmically with input length. 

While empirical studies drive much of the progress in sequence modelling, foundational analysis provides complementary insights by establishing what architectures can and cannot represent in principle, independent of training dynamics. Motivated by the recent success of such analyses for transformers (\cite{stroblWhatFormalLanguages2024}), we investigate SSMs using the same logical and complexity-theoretic framework. This allows for a precise \emph{relative} comparison: we map SSM variants to the exact same logic fragments recently used to characterise transformer variants (e.g., UHAT, AHAT). 

We structure our analysis around three conceptual levels of expressiveness. The most fundamental level is pattern matching (linear temporal logic / first order logic), which captures the ability to detect sequences of events based on their relative order without counting. Extending this, we consider modular predicates, which enable the tracking of periodic positions similar to what positional embeddings in transformer architectures do. Finally, we analyse global counting, which allows for comparing quantities over the entire sequence and strictly necessitates log-precision arithmetic. 

By mapping SSMs to these levels, we provide structural guarantees on which patterns are theoretically learnable or provably impossible for a given architecture. We show that diagonal-gated SSM can recognise languages defined by pure-past Linear Temporal Logic on finite traces ($\pLTLf$) with counting capabilities, while time-invariant SSM capture languages expressible in fragments with unary temporal operators plus modular predicates. Furthermore, we show that global counting strictly requires log-precision arithmetic. The lower bounds established in this paper contribute to a growing body of research on the theoretical foundations of neural network architectures for sequence modelling, complementing empirical observations about their practical performance. Crucially, the resulting impossibility proofs identify fundamental architectural bottlenecks, such as the inability of diagonal fixed-precision SSMs to represent non-monotonic patterns like $(aa)^*$, which hold independently of the training data or optimization procedure. In this sense, our foundational characterisation complements empirical benchmarks by delineating which tasks are structurally out of reach for specific SSM variants.

\subparagraph{Related Work.}
Recent work on the theoretical foundations of SSM has been related to circuit complexity, cf.\ \cite{merrillIllusionStateStatespace2024}, automata theory, cf.\ \cite{sarrofExpressiveCapacityState2024a}, and communication complexity, cf.\ \cite{zubic2025limits}. \cite{alsmann2025computationalcomplexitysatisfiabilitystate} investigated the computational complexity of verifying SSM. To the best of our knowledge, this is the first paper connecting the expressive power of SSM to formal logic.

The work on the formal expressiveness of SSM started with \cite{merrillIllusionStateStatespace2024} who provided upper bounds via circuit complexity. They showed that diagonal gated as well as time-invariant SSM working over $\log$-precision
as well as models working over fixed-precision (e.g. floating-point) are contained in $\text{TC}^0$. This is the class of functions definable with boolean circuits of polynomial size, constant depth and using threshold gates. They also showed that SSM with arbitrary gates, meaning that the gate can be an arbitrary input-dependent matrix, can recognise all regular languages including $\text{NC}^1$-complete languages. This implies a strict increase in expressiveness unless $\text{TC}^0=\text{NC}^1$. Further work by \cite{sarrofExpressiveCapacityState2024a} provided a lower bound for diagonal gated SSM, proving that they can recognise all star-free languages. They also showed that under certain restrictions this lower bound is tight in the sense that these SSM recognise a regular language if and only if it is star-free. 
Our work significantly extends the lower bound provided by Sarrof et al. (2024). While they proved that diagonal SSMs (with a specific output function) can recognise star-free languages, our Theorem 1 recovers this as a special case within a broader hierarchy. Crucially, we extend the analysis along three new axes: 
(1) We characterise \emph{time-invariant} and \emph{mixed} SSMs, relating them to modular predicates ($\mLTLf$); 
(2) we analyse the impact of arithmetic precision, showing that \emph{log-precision} is strictly necessary for non-regular counting languages (like $a^n b^n c^n$); and 
(3) we integrate these findings into the existing transformer landscape (UHAT/AHAT), enabling a direct architectural comparison. \cite{zubic2025limits} showed that SSM working over fixed-precision can only recognise regular languages.

Parallel to the work on the expressiveness of SSM there has been a lot of research on the expressiveness of transformer architectures via connections to logics (see \cite{stroblWhatFormalLanguages2024} for a survey on these results). The most important work regarding this paper are \cite{yang2024masked}, who proved connections between unique hard-attention transformers and first-order logic as well as \cite{barcelo2024logical} and \cite{yang2025kneedeepcrasptransformerdepth}, who investigate the counting abilities of several transformer architectures via extensions of linear temporal logic.

\section{Fundamentals}

\subparagraph{State Space Models.}
We explore SSM in a generalised setting as presented in \cite{merrillIllusionStateStatespace2024}. For a structured analysis of this model, we will formalise its architecture following similar approaches as in \cite{sarrofExpressiveCapacityState2024a} and \cite{merrillIllusionStateStatespace2024}.
An SSM layer $l$ is defined as a tuple $(\bs{h}_0, \gate, \inc, \phi)$, where $\bs{h}_0 \in \reals^d$. The function $\gate$ is a mapping $\reals^d \rightarrow \reals^{d\times d}$, $\inc$ is a mapping $\reals^d \rightarrow \reals^d$, and $\phi$ is a mapping $\reals^d \times \reals^d \rightarrow \reals^d$. An SSM layer transforms an input sequence of vectors $\bs{x}_1 \cdots \bs{x}_k \in (\reals^d)^*$ into an output sequence $\bs{z}_1 \cdots \bs{z}_k\in (\reals^d)^*$ through the following process: it computes an intermediate sequence $\bs{h}_1\cdots\bs{h}_k \in (\reals^d)^+$ via the linear recurrence
\[
\bs{h}_t = \gate(\bs{x}_t) \cdot\bs{h}_{t-1} + \inc(\bs{x_t}) \quad \text{for } 1\leq t \leq k
\]
and subsequently generates the output via $\bs{z}_t = \phi(\bs{h}_t, \bs{x}_t)$.

An SSM $S$ comprising $L$ layers processes sequences over a finite alphabet $\Sigma$. It is expressed as a tuple $S=(\emb, l_1, \cdots, l_L, \out)$. Each $l_i$ refers to a layer 
as defined above, while $\emb$ is a function $\Sigma \rightarrow \reals^d$, and $\out$ is a function 
$\reals^d \rightarrow \reals^d$. The SSM computes a function $\Sigma^* \rightarrow \reals$ as 
follows: let $\sigma=\sigma_1 \cdots \sigma_k\in\Sigma^*$ be a word. Initially, the SSM computes the embedding $\bs{x}_1^0\cdots \bs{x}_k^0$ 
of the word by setting $\bs{x}_i^0=\emb(\sigma_i)$. Subsequently, for each layer $1\leq j\leq L$: compute 
$\bs{z}_1^j\cdots \bs{z}_k^j = l_j (\bs{x}_1^{j-1} \cdots \bs{x}_k^{j-1})$. Each layer’s output serves as the input for the
succeeding layer, meaning $\bs{x}_1^{j+1}\cdots \bs{x}_k^{j+1} = \bs{z}_1^{j}\cdots \bs{z}_k^{j}$. The SSM's final output, 
denoted as $\bs{y}_1\cdots \bs{y}_k$, is derived by applying $\out$ element-wise: $\bs{y}_i = \out(\bs{z}_i^L)$. In the end, 
the output of $S(\sigma)$ is computed by $\bs{y}_k$. We say that $S$ accepts a word $\sigma$ if $S(\sigma)=1$. 
Otherwise it is rejected. We denote the language accepted by $S$ as $L(S)$.

Recent work on the expressiveness of SSM, cf.\ \cite{merrillIllusionStateStatespace2024} and \cite{sarrofExpressiveCapacityState2024a}, distinguishes between two main 
classes of SSM which differ in the allowed $\gate$ functions. The class of \textit{time-invariant} SSM, cf.\ 
\cite{mehta2023long} and \cite{10.5555/3618408.3619518}, 
only allow layers with $\gate$ functions such that there is a matrix $A\in\reals^{d\times d}$ with $\gate(\bs{x})=A$ 
for all $\bs{x}\in\reals^d$.
The class of \textit{diagonal}-gated SSM, cf.\ \cite{guMambaLinearTimeSequence2024}, \cite{deGriffinMixingGated2024} and \cite{10.5555/3692070.3694403}, only allow layers that use $\gate$ functions such that $\gate(\bs{x})$ is a diagonal matrix (with non-negative entries) but can depend on $\bs{x}$. Some SSM architectures like RetNet from \cite{sunRetentiveNetworkSuccessor2023} use diagonal time-invariant gates. We call SSM which combine diagonal as well as time-invariant layers \textit{mixed}. To the best of our knowledge, mixed SSM have not been used in practice yet, but they are interesting from a theoretical point of view as upper bounds on expressiveness for those directly apply to time-invariant \emph{and} diagonal SSM. We therefore include them in our investigation. Lastly, we call SSM which do not have any restriction on the $\gate$-mechanism, cf.\ \cite{hasaniLiquidStructuralStateSpace2022}, \textit{arbitrary}.
The global output $\out$ and the output of each layer $\phi$ which also gets the initial input modelling a residual 
connection is computed by a non-linear function. In practice, different kinds of non-linear functions are used. In this paper we assume them to be 
represented by a Feedforward Neural Network (FNN) with ReLU activations. 

\subparagraph{Arithmetics.}
In our expressiveness analysis, we consider two distinct settings for the arithmetic precision of SSM computations. First, we examine fixed-precision arithmetic, where all values and computations use a constant number of bits $b$ (like floating-point or fixed-point representations), regardless of the input length. We say that an SSM works over fixed-precision arithmetic if all values and computations are carried out using only these $b$ bits. We additionally assume that the fixed-precision arithmetic is saturated, meaning that if an overflow occurs, the result is capped at the maximum representable value (resp. the minimum representable value for underflows) and that arithmetic operations are monotonic with respect to rounding. This is standard behaviour in, e.g., floating-point arithmetic, cf.\  \cite{8766229}. Second, we study log-precision arithmetic, where the precision grows logarithmically with the input length—specifically, using $O(\log n)$ bits for inputs of length $n$. This setting was studied with regards to expressiveness both for SSM by \cite{merrillIllusionStateStatespace2024} and for transformer architectures by \cite{merrillParallelismTradeoffLimitations2023a} and \cite{haoFormalLanguageRecognition2022d}, as it is both practically reasonable and theoretically significant: it provides sufficient precision to accurately count occurrences or compute sums that grow linearly with input size, while avoiding the unrealistic assumption of unbounded precision. This balanced approach is particularly interesting because it offers a middle ground between the restrictive fixed-precision model and the impractical unbounded precision model.

\subparagraph{Linear Temporal Logic over Finite Traces.}
We use linear temporal logic over finite traces ($\LTLf$) as introduced by \cite{degiacomoLinearTemporalLogic2013} to analyse the expressiveness of SSM. It extends propositional logic with temporal operators, enabling the expression of properties that involve the ordering and timing of events. We consider the pure-past fragment $\pLTLf$ of $\LTLf$ as studied by \cite{degiacomoPurepastLinearTemporal2021}, which uses the temporal operators \textit{yesterday} $\yesterday$ ("at the previous position"), \textit{previously} $\previously$ ("at some point in the past") and \textit{since} $\since$ ("since some event occured, another event occured continuously"). Let $\mathcal{P}$ be a finite set of atomic propositions. The syntax of $\pLTLf$ is defined as follows:
\[\varphi ::= p \mid \neg \varphi \mid \varphi \land \varphi \mid \yesterday \varphi \mid \previously \varphi \mid \varphi \since \varphi\]
Formulas of $\pLTLf$ are interpreted over finite words over the alphabet $\Sigma = 2^{\mathcal{P}}$. Given a word $\sigma=\sigma_1\cdots \sigma_n\in \Sigma^*$ and $i\in\{1, \ldots, n\}$, the semantics of $\pLTLf$ is inductively defined as follows:
\begin{alignat*}{2}
    \sigma,i &\models p &&\iff p \in \sigma_i \\
    \sigma,i &\models \neg \varphi &&\iff \sigma,i \not\models \varphi \\
    \sigma,i &\models \varphi \land \psi &&\iff \sigma,i \models \varphi \text{ and } \sigma,i \models \psi\\
    \sigma,i &\models \yesterday\varphi &&\iff i> 1 \text{ and } \sigma,i-1 \models \varphi\\
    \sigma,i &\models \previously\varphi && \iff \exists \; 1\leq k\leq i: \sigma,k\models\varphi \\
    \sigma,i &\models \varphi\since\psi &&\iff  \exists \; 1\leq k \leq i: \sigma,k\models \psi \text{ and }\text{f.a } \; k < j \leq i: \sigma,j\models \varphi
 \end{alignat*}
 We say that $\sigma$ is a model of $\varphi$ iff $\sigma,n\models \varphi$ (or simply $\sigma\models \varphi$). Furthermore, we denote the language of a formula as $L(\varphi) = \{\sigma \in\Sigma^* \mid \sigma\models\varphi\}$. We use typical abbreviations $\ttrue$ (always evaluates to true), $\ffalse$ (always evaluates to false) and $\history \varphi = \neg \previously \neg \varphi$ ($\varphi$ holds on all past positions). Furthermore, we call the unary fragment of $\pLTLf$, which only uses \textit{yesterday} and \textit{previously}, $\uLTLf$.

\subparagraph{Counting Extensions.}
\cite{yang2024counting} and \cite{barcelo2024logical} analysed the expressive power of several transformer architectures, regarding the ability to do counting, by extending $\LTLf$ with a counting operator. $\pLTLf[\backhash, \fronthash]$ extends $\pLTLf$ syntactically by one additional case: if $\varphi_1,\cdots,  \varphi_i, \psi_1, \cdots, \psi_j$ are $\pLTLf[\backhash, \fronthash]$ formulas, $a_1\cdots a_i, b_1, \cdots b_j \in \mathbb{Z}$, $c \in \mathbb{N}$ and ${\sim}\in\{<,\leq, =, \geq, >\}$ then
\[
\sum_i a_i \backhash \varphi_i + \sum_j b_j \fronthash \psi_j \sim c
\]
is also a $\pLTLf[\backhash, \fronthash]$ formula. When evaluating a formula on a word $\sigma$ and position $i$, $\backhash \varphi$ counts the number of positions $j \leq i$ such that $\sigma,j \models \varphi$, and $\fronthash \varphi$ does so analogously for positions $j \ge i$.

By $\pLTLf[\backhash]$ we denote the logic in which every counting subformula has $b_j=0$ for all $j$. The use of $\overleftarrow{\#}$ (and not $\overrightarrow{\#}$) is tied to the nature of SSMs: unlike transformers with full sequence attention, SSMs only have access to the prefix of the input at each step.

\subparagraph{Modular Predicates.} We introduce a third variant of $\pLTLf$ which adds \textit{modular predicates}. The syntax of $\mLTLf$ is extended by additional atomic formulas $\MOD_r^m$ for $m>1$ and $0\leq r <m$. The semantics is the same as in $\pLTLf$. For the new operator it is extended by
\begin{alignat*}{2}
    \sigma, i &\models \MOD_r^m &\iff i \equiv r \pmod{m}\ .
\end{alignat*}
Crucially, allowing these modular predicates strictly increases the expressive power beyond star-free languages (which correspond to standard $\pLTLf$). It enables the definition of periodic properties, such as recognising the language $(aa)^*$, which necessitates distinguishing even from odd positions. Formally, this extension lifts the expressiveness to the class of all regular languages definable in $\text{AC}^0$ \citep{straubingFiniteAutomataFormal1994}. This complexity class serves as a key reference point in our later analysis, as we will show that it aligns with the capabilities of mixed SSMs and transformers equipped with positional encodings.

\section{Results}

Our analysis establishes lower bounds on SSM expressiveness across different architectural variants and precision settings. The results reveal an interesting hierarchy where expressiveness varies significantly based on gating mechanisms and arithmetic precision. An overview of these findings is illustrated in Figure~\ref{fig:overview}. It also shows how our fragments of $\pLTLf$ relate to circuit complexity and first-order logic. A detailed explanation of these connections can be found in Appendix \ref{app:logic}.

\subparagraph*{Diagonal SSM.} 
Theorem~\ref{thm:diag-ltl} demonstrates that diagonal SSM with fixed-width arithmetic can recognise the languages definable in $\pLTLf$, which corresponds to the class of first-order definable or star-free regular languages. This result is constructive: we show how to systematically translate any $\pLTLf$ formula into an equivalent diagonal SSM by decomposing the formula according to its nesting depth and implementing each temporal operator through appropriate gate mechanisms.

The key insight underlying this construction is that while temporal operators like $\yesterday$ and $\previously$ can be implemented using simple gating patterns, the \textit{since}-operator requires input-dependent diagonal gates due to its recursive definition $\varphi \since \psi \equiv \psi \lor (\varphi \land \yesterday(\varphi \since \psi))$. This dependency on current input evaluation necessitates the diagonal structure of the gates.

We complement this lower bound with a matching non-expressibility result. Theorem~\ref{thm:diag-upper} proves that diagonal SSM with fixed precision cannot recognise the simple non-star-free language $(aa)^*$. This limitation arises from a fundamental monotonicity property (Lemma~\ref{lem:diag_monotonic}): when a diagonal SSM repeatedly processes the same input symbol, its output must eventually stabilise.

When diagonal SSM are equipped with logarithmic precision arithmetic, their expressiveness expands considerably. Theorem~\ref{thm:diag-ltl-back} shows that these models can recognise all languages definable in $\pLTLf[\backhash]$, pure-past LTL extended with backward-looking counting operators. \cite{barcelo2024logical,yang2024counting} also used this logic to investigate the expressiveness of transformer architectures. The extension by counting
formulas enables the recognition of non-regular (and even non context-free) languages, such as $\{a^n b^n c^n \mid n \geq 0\}$, which can be expressed by 
\[
\varphi = \history \big( (a\rightarrow \neg \previously (b\lor c)) \land (b \rightarrow \neg \previously c)\big) 
       \land (\backhash a - \backhash b = 0) \land (\backhash c - \backhash b =0)
\]

\subparagraph*{Time-Invariant SSM.}
 While we conjecture that time-invariant SSM cannot express the temporal dependencies captured by the \text{since}-operator (Conjecture~\ref{conj:time-invariant_since}), they possess another capability: the computation of modular predicates about sequence positions. Lemma~\ref{lem:mod} demonstrates how time-invariant SSM can maintain counters modulo $m$ using cyclic permutation matrices, enabling recognition of languages like $(aa)^*$ that are beyond the reach of diagonal SSM with fixed precision.

This leads to the characterisation in Theorem~\ref{thm:timeinvariant-ltl}: time-invariant SSM with fixed precision recognise all languages definable in $\uLTLf[\MOD]$, the unary fragment of pure-past LTL extended with modular predicates. With logarithmic precision, they can additionally handle counting operators, recognising languages in $\uLTLf[\MOD, \backhash]$.

\subparagraph{Mixed and Arbitrary Gates.}
Our hierarchy is completed by considering SSM that combine multiple gating mechanisms. Corollary~\ref{cor:mixed} establishes that mixed SSM (combining both diagonal and time-invariant layers) with fixed precision can recognise all regular languages in $\text{AC}^0$, effectively capturing the union of capabilities from both individual architectures. SSM with arbitrary gates achieve even greater expressiveness, recognising all regular languages as established in prior work by \cite{merrillIllusionStateStatespace2024}.

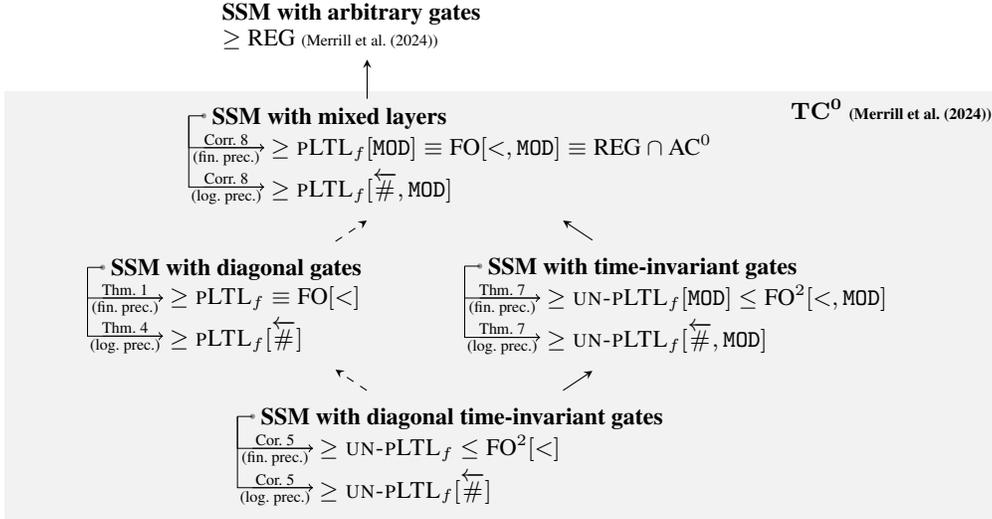
\begin{figure}
    \centering\small
    \begin{tikzpicture}          
    \tikzset{greybox/.style={fill=gray!10, inner sep=13pt}}
    
    \usetikzlibrary{backgrounds}


    \matrix[matrix of nodes, row sep=-2pt,nodes={anchor=west,align=left},row 1/.style={text depth=-1cm}] (SSMtidiag) at (3,1) {
      \textbf{SSM with diagonal time-invariant gates} \\
      \hspace*{7mm} $\geq \uLTLf \le \text{FO}^2[<]$ \\
      \hspace*{7mm} $\geq \uLTLf[\backhash]$ \\};

    \matrix[matrix of nodes, row sep=-2pt,nodes={anchor=west,align=left},row 1/.style={text depth=-1cm}] (SSMmixed) at (3,5) {
        \textbf{SSM with mixed layers} \\
        \hspace*{7mm} $\geq \pLTLf[\MOD] \equiv \text{FO}[<, \MOD] \equiv \text{REG} \cap \text{AC}^0$ \\
        \hspace*{7mm} $\geq \pLTLf[\backhash, \MOD]$ \\};

    \matrix[matrix of nodes, row sep=-2pt,nodes={anchor=west,align=left},row 1/.style={text depth=-1cm}] (SSMdiag) at (0,3) {
        \textbf{SSM with diagonal gates} \\
        \hspace*{7mm} $\geq \pLTLf \equiv \text{FO}[<]$ \\
        \hspace*{7mm} $\geq \pLTLf[\backhash]$ \\};

    \matrix[matrix of nodes, row sep=-2pt,nodes={anchor=west,align=left},row 1/.style={text depth=-1cm}] (SSMdiag) (SSMti) at (6,3) {
        \textbf{SSM with time-invariant gates}  \\
        \hspace*{7mm} $\geq\uLTLf[\MOD] \leq \text{FO}^2[<, \MOD]$ \\
        \hspace*{7mm} $\geq \uLTLf[\backhash, \MOD]$ \\};

    \foreach \v/\tone/\ttwo in {SSMtidiag/{Cor.~\ref{cor:ti_diag}}/{Cor.~\ref{cor:ti_diag}},
                                SSMmixed/{Corr.~\ref{cor:mixed}}/{Corr.~\ref{cor:mixed}},
                                SSMdiag/{Thm.~\ref{thm:diag-ltl}}/{Thm.~\ref{thm:diag-ltl-back}},
                                SSMti/{Thm.~\ref{thm:timeinvariant-ltl}}/{Thm.~\ref{thm:timeinvariant-ltl}}} 
    {
      \node[fill=gray!80,circle,inner sep=0pt,minimum size=2pt] at (\v-1-1.west) {};
      \path[draw,very thin] (\v-1-1.west) -- +(-.2,0) coordinate (\v1); 
      \path[draw,very thin] (\v-2-1.west) -- +(.8,0) coordinate (\v2); 
      \path[draw,very thin] (\v-3-1.west) -- +(.8,0) coordinate (\v3); 
      \path[draw,->,very thin,font=\tiny] (\v1) |- node [pos=.75,above=-2pt] {\tone} node [pos=.75,below=-2pt] {(fin.\ prec.)} (\v2);
      \path[draw,->,very thin,font=\tiny] (\v1) |- node [pos=.75,above=-2pt] {\ttwo} node [pos=.75,below=-2pt] {(log.\ prec.)} (\v3);
    }

    \node[align=left] (SSMarb) at (1.7,6.7){\begin{tabular}{ll}
        \textbf{SSM with arbitrary gates}  \\
        $\geq \text{REG}$ \tiny (\cite{merrillIllusionStateStatespace2024}) & \\
    \end{tabular}};
    
    \begin{pgfonlayer}{background}
        \node[greybox, inner ysep=0pt, inner xsep=12mm, fit={(SSMtidiag) (SSMdiag) (SSMti) (SSMmixed)}, label={[anchor=north east, font=\bfseries]north east:$\mathbf{TC}^{\mathbf{0}}$ \tiny(\cite{merrillIllusionStateStatespace2024})}] (tc0box) {};
    \end{pgfonlayer}

    \path[-stealth, dashed, shorten <=1pt, shorten >=1pt] (SSMtidiag)  edge (SSMdiag);
    \path[-stealth, shorten <=1pt, shorten >=1pt] (SSMtidiag)  edge (SSMti);

    \path[-stealth, dashed, shorten <=1pt, shorten >=1pt] (SSMdiag)  edge (SSMmixed);
    \path[-stealth, shorten <=1pt, shorten >=1pt] (SSMti)  edge (SSMmixed);

    \path[stealth-, shorten <=-1pt, shorten >=-2.5mm] (SSMarb) edge ++(0,-.7);
    
    \end{tikzpicture}
             
\caption[Expressiveness Relations]
{Expressiveness hierarchy of SSM architectures mapped to logical fragments and complexity classes. 
We establish lower bounds for Diagonal, Time-Invariant, and Mixed SSMs, distinguishing between fixed-precision and log-precision (enabling counting operators $\backhash$). Dashed arrows indicate provably strict inclusions (Theorem~\ref{thm:diag-upper}). The gray box delineates the $\text{TC}^0$ upper bound from prior work \citep{merrillIllusionStateStatespace2024}.
}

\label{fig:overview}
     \end{figure}

\section{Expressive Power of Diagonal SSM}
\label{sec:diagonal}

We show that diagonal SSM are at least as expressive as $\pLTLf$, which is expressively equivalent to the set of star-free languages. This follows the lines of \cite{sarrofExpressiveCapacityState2024a} but we use it as the base for further constructions regarding extensions and restrictions of $\pLTLf$.

In order to evaluate a $\pLTLf$ formula $\varphi$ at position $i$ of a word $\sigma$, one can evaluate the subformulas of $\varphi$ in a bottom-up manner. We order the subformulas in such a way that every subformula is evaluated after its own subformulas. Independent subformulas can be evaluated in parallel (in a single SSM layer). This means that the number of layers needed for all our constructions corresponds to the depth of the syntax tree of formula $\varphi$, here called \emph{nesting depth} for brevity.

\begin{definition}
    The \textit{nesting-depth} $\nd(\varphi)$ of a $\pLTLf$ formula $\varphi$ is defined inductively:
    $\nd(p) = \nd(\MOD_r^m) = 0$, $\nd(\varphi \mathbin{\circ_\text{bin}} \psi) = \max(\nd(\varphi), \nd(\psi)) +1$, $\nd(\mathop{\circ_\text{un}} \varphi) = \nd(\varphi) + 1$ and 
        $\nd\left( \sum_{j=1}^k a_j \cdot \backhash \varphi_j \sim c \right) = \max\{\nd(\varphi_j) \mid 1\leq j \leq k\} + 1$
    with $\circ_\text{bin} \in \{\land, \since\}$ and $\circ_\text{un} = \{\neg, \previously, \yesterday\}$.
\end{definition}

This allows us to order the subformulas of $\varphi$ in the way described above. For each subformula $\psi$ of $\varphi$, we construct a diagonal SSM layer $l_\psi$ which computes for each position $i$ whether $\sigma,i\models \psi$. The SSM $S_\varphi$ for the whole formula $\varphi$ then computes whether $\sigma,n\models \varphi$. A similar idea has already been used by \cite{alsmann2025computationalcomplexitysatisfiabilitystate} to show that the satisfiability problem for diagonal SSM over fixed-precision is PSPACE-complete. 

\begin{theorem}
\label{thm:diag-ltl}
Diagonal SSM with fixed-precision recognise all languages definable in $\pLTLf$.
\end{theorem}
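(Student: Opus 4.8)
The plan is to proceed by structural induction on the nesting depth $\nd(\varphi)$, building the SSM $S_\varphi$ layer by layer so that after processing the subformulas of depth $\le j$ we have, at each position $i$, a coordinate in the hidden/output vector that holds the boolean value $\llbracket \sigma,i \models \psi \rrbracket \in \{0,1\}$ for every subformula $\psi$ with $\nd(\psi) \le j$. Concretely, the embedding $\emb$ maps each $\sigma_i \in 2^{\mathcal P}$ to the indicator vector of which atomic propositions hold (depth-0 data), plus a constant $1$ coordinate for bias terms. The dimension $d$ is the number of subformulas of $\varphi$ (plus a few bookkeeping slots), and each layer $l_\psi$ is responsible for one subformula, reading the already-computed truth values of its immediate subformulas (passed along via the residual connection in $\phi$) and writing the truth value of $\psi$.

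The heart of the argument is the case analysis for the temporal operators, realised through the recurrence $\bs{h}_t = \gate(\bs{x}_t)\cdot \bs{h}_{t-1} + \inc(\bs{x}_t)$ with diagonal $\gate$. For a boolean operator $\neg\psi$ or $\psi_1 \land \psi_2$: no recurrence is needed; set the relevant gate diagonal entry to $0$ and let $\phi$ apply a small ReLU network computing $1-x$ or $\min(x_1,x_2)$ on the residual inputs. For $\yesterday\psi$: set the gate entry on the corresponding coordinate to $1$ so it latches the previous hidden value, while $\inc$ loads the current truth value of $\psi$; then $\bs{h}_t$ carries the value of $\psi$ at $t-1$ (with the boundary case $i=1$ giving $0$, matching the semantics, since $\bs{h}_0$ is chosen as $\vzero$ on that coordinate). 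For $\previously\psi$: use gate entry $1$ and increment by the truth value of $\psi$ to accumulate a running count, then have $\phi$ threshold it ($\ge 1 \mapsto 1$); this is where \emph{fixed} precision needs a brief justification — we only need to distinguish $0$ from ``$\ge 1$'', and under saturated monotone arithmetic the accumulator, once positive, stays at a positive representable value, so the threshold is correct regardless of $b$. For $\psi_1 \since \psi_2$: this is the key place input-dependence is used, exploiting $\varphi\since\psi \equiv \psi \lor (\varphi \land \yesterday(\varphi\since\psi))$ — set the gate diagonal entry on the ``$\since$'' coordinate to the current truth value of $\psi_1$ (either $0$ or $1$, read from the residual input via a ReLU map inside $\gate$), and set $\inc$ to the current truth value of $\psi_2$; one checks by induction on $t$ that $\bs{h}_t$ on this coordinate equals $\llbracket \sigma,t \models \psi_1\since\psi_2\rrbracket$, again using saturation to keep the value in $\{0\}\cup\{$positive$\}$ and a final threshold in $\phi$ to normalise it back to $\{0,1\}$. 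Finally $\out$ projects onto the coordinate for $\varphi$ itself, so $\bs{y}_k = S(\sigma) \in \{0,1\}$ and $L(S) = L(\varphi)$.

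The main obstacle, and the point requiring the most care, is the interaction between the \emph{fixed} $b$-bit saturated arithmetic and the accumulating coordinates used for $\previously$ and $\since$: a naive running sum overflows on long inputs, so the construction must be arranged so that every semantically meaningful quantity is boolean-valued \emph{at the level at which it is consumed}. For $\since$ the gate multiplier is always exactly $0$ or $1$ and the increment is $0$ or $1$, so the accumulator only ever needs to represent ``$0$'' versus ``at least $1$''; saturation then guarantees that once it becomes positive it remains a fixed positive representable value, and monotonicity of rounding guarantees the threshold comparison is consistent. Spelling out that all the intermediate gate values, increments, and $\phi$-outputs are representable in $b$ bits (they are drawn from a finite set $\{0,1\}$ together with a handful of constants fixed by $\varphi$, independent of $n$) is the bookkeeping that makes ``fixed-precision'' honest; once that is in place the correctness of each layer is a routine induction on $t$, and the correctness of $S_\varphi$ is the outer induction on $\nd(\varphi)$.
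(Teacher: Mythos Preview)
Your overall architecture matches the paper's: one coordinate per subformula, layers ordered by nesting depth, booleans propagated through the residual, and the $\since$ case handled by an input-dependent diagonal entry implementing the unfolding $\psi_1\since\psi_2 \equiv \psi_2 \lor (\psi_1 \land \yesterday(\psi_1\since\psi_2))$. Your treatment of $\previously$, $\since$, and the fixed-precision justification via saturation are all essentially the paper's argument.

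There is, however, a genuine gap in your $\yesterday$ case. You write: ``set the gate entry to $1$ so it latches the previous hidden value, while $\inc$ loads the current truth value of $\psi$; then $\bs h_t$ carries the value of $\psi$ at $t-1$.'' But with gate $1$ and $\inc = [\psi]_t$ the recurrence is $(\bs h_t)_j = (\bs h_{t-1})_j + [\psi]_t$, which is a running sum, not a one-step delay --- it is exactly your $\previously$ construction. With a \emph{diagonal} gate, coordinate $j$ of $\bs h_t$ depends only on coordinate $j$ of $\bs h_{t-1}$ and on $\bs x_t$; there is no way to write $(\bs h_t)_j = [\psi]_{t-1}$ directly, because $[\psi]_{t-1}$ is available neither in $\bs x_t$ nor recoverable from a scalar recurrence of this form without further encoding. (A shift-register would need an off-diagonal gate entry, which is forbidden here.)

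The paper closes this gap with a non-obvious trick (its Lemma~\ref{lem:prev_encoding}, due to \cite{sarrofExpressiveCapacityState2024a}): take gate $=\tfrac14$ and $\inc = [\psi]_t$, so that $(\bs h_t)_j = [\psi]_t + \tfrac14[\psi]_{t-1} + \tfrac{1}{16}[\psi]_{t-2}+\cdots$. The geometric tail is bounded by $\tfrac{1}{12}$, so the four cases $([\psi]_t,[\psi]_{t-1})\in\{0,1\}^2$ land in disjoint intervals, and a small ReLU network $N_{\text{prev}}$ in $\phi$ extracts $[\psi]_{t-1}$ exactly. The factor $\tfrac14$ (rather than $\tfrac12$) gives enough separation to survive fixed-precision rounding. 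You need this, or an equivalent encoding, for the proof to go through; the rest of your sketch is correct.
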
 
\begin{proof}[Proof sketch]
    The detailed proof can be found in Appendix \ref{app:diag-ltl}. Each subformula corresponds to one dimension of the SSM's hidden states. The increment function of each layer together with the position-wise FNN add information about subformulas evaluated in previous layers and take care of the non-temporal subformulas such as conjunctions, negations and comparisons. After each layer, each hidden state is a boolean vector, indicating which subformulas are satisfied at each position. The temporal operators $\yesterday$ and $\previously$ can be implemented by diagonal and time-invariant gates, because they do not require any conjunctions with the current input. $\yesterday$ consistently references the immediately preceding position, and $\previously$ accumulates occurrences up to the current point. In contrast, the \emph{since}-operator is inherently recursive, defined by $\varphi \since \psi \equiv \psi \lor (\varphi \land \yesterday(\varphi \since \psi))$, thus combining previous and current states through conjunction. Evaluating this operator inherently depends on the current input evaluation of $\varphi$, necessitating an input-dependent gate.
\end{proof}

Next, we show that this characterisation is tight in the sense that diagonal SSM working over fixed-precision cannot recognise (aa)*, which is not definable in $\pLTLf$. This non-expressiveness result is based on a monotonicity property of diagonal SSM working over fixed-precision. Any diagonal SSM working over fixed-precision, after repeatedly seeing that same input must eventually become constant.

\begin{lemma}
\label{lem:diag_monotonic}
Let $\mathcal{S}$ be a diagonal SSM over an alphabet $\Sigma$. For $\sigma \in \Sigma$, let $f_\sigma: \mathbb{N} \rightarrow \mathbb{F}^d$ be the function defined by $f_\sigma(n) = \bs{y}_n$, where $\bs{y}_n$ is the last vector of the output sequence of $\mathcal{S}$ after the last layer on the word $\sigma^n$. Then there exists a number $N \in \mathbb{N}$ such that for all $n\geq N$: $f_\sigma(n)= f_\sigma(N)$.
\end{lemma}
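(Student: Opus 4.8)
The plan is to combine three facts. First, because the gates are diagonal, the per-step update of a layer decouples into $d$ independent scalar recurrences. Second, because the diagonal entries are non-negative and the fixed-precision arithmetic is saturating and monotone with respect to rounding, each of these scalar updates is a non-decreasing self-map of the finite totally ordered set $\mathbb{F}$. Third, a non-decreasing self-map of a finite chain, iterated from any starting value, produces an eventually constant orbit (after at most $|\mathbb{F}|$ steps). From these I will derive a single-layer statement: if a diagonal SSM layer is fed an input sequence whose entries are constant from some index $M$ on, then its hidden-state sequence — and hence its output sequence — is constant from index $M+|\mathbb{F}|$ on, and this bound depends only on the layer and on $|\mathbb{F}|$, not on the length of the input. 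The lemma then follows by applying this statement to the $L$ layers in turn.

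Concretely, I would first observe that by the prefix (causal) nature of the recurrence, the output of layer $j$ at position $t$ is the same on $\sigma^n$ for every $n \geq t$, so one may speak of a single sequence $\bs{z}^j_1, \bs{z}^j_2, \ldots$ per layer. Layer $1$ is fed the constant vector $\emb(\sigma)$, so its recurrence is $\bs{h}_t = D\,\bs{h}_{t-1} + \bs{c}$ with $D = \gate_1(\emb(\sigma))$ diagonal with non-negative entries and $\bs{c} = \inc_1(\emb(\sigma))$ fixed; coordinate $i$ thus evolves as $h^{(i)}_t = g_i\bigl(h^{(i)}_{t-1}\bigr)$ with $g_i(x) = (d_i \odot x) \oplus c_i$, where $\odot,\oplus$ denote fixed-precision multiplication and addition — a non-decreasing map of $\mathbb{F}$ to itself. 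Hence $\bs{h}_t$ is constant for $t \geq M_1 := |\mathbb{F}|$, and $\bs{z}^1_t = \phi_1(\bs{h}_{M_1},\emb(\sigma)) =: \bs{v}_1$ for all $t \geq M_1$. Inductively, if $\bs{z}^j_t = \bs{v}_j$ for all $t \geq M_j$, then layer $j+1$ sees an input that is constant from index $M_j$ on; running the same monotone-iteration argument from the (now fixed) state $\bs{h}^{j+1}_{M_j-1}$ shows that $\bs{h}^{j+1}_t$, and therefore $\bs{z}^{j+1}_t =: \bs{v}_{j+1}$, is constant for $t \geq M_{j+1} := M_j + |\mathbb{F}|$. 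Taking $N := M_L$ (which is at least $1$), every $n \geq N$ puts position $n$ inside the stable part of each layer, so $\bs{z}^L_n = \bs{v}_L$ and hence $f_\sigma(n) = \out(\bs{v}_L) = f_\sigma(N)$.

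The part requiring the most care is the third fact and its interplay with the layer induction. One must check that a non-decreasing self-map of a finite chain really does iterate to a constant — the orbit is monotone (non-decreasing if its first step goes up, non-increasing if it goes down, by monotonicity of the map) and therefore stabilises within $|\mathbb{F}|$ steps — and, crucially, that this bound is uniform in $n$: this uniformity is exactly what prevents the finite ``transient'' produced by the earlier layers from growing with the input length, which is what the induction over layers needs. It is worth remarking where diagonality is indispensable: without it the per-step map acts on all of $\mathbb{F}^d$ rather than coordinatewise, it is no longer monotone, and the conclusion fails — consistently with the fact that arbitrary-gated SSMs do recognise $(aa)^*$.
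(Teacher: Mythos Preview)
Your proposal is correct and follows essentially the same approach as the paper's proof: decouple the hidden-state recurrence into independent scalar recurrences via diagonality, argue that each such recurrence is monotone because the gate entry is non-negative, conclude stabilisation from finiteness of $\mathbb{F}$, and then induct over the layers. Your version is in fact more careful than the paper's sketch in two respects --- you make explicit that the per-coordinate update is a non-decreasing \emph{self-map} of $\mathbb{F}$ (so the orbit is monotone regardless of rounding effects), and you give a uniform bound $|\mathbb{F}|$ on the transient length, which is exactly what is needed to make the layer induction go through cleanly --- but the underlying idea is the same.
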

\begin{proof}
    The key insight is that since the gate matrices are diagonal, each dimension of the hidden state evolves independently. For the first layer this means that for each dimension $i$, $(\bs{h}_t)_i = \gate(\emb(\sigma))_i \cdot (\bs{h}_{t-1})_i + \inc(\emb(\sigma))_i$. Because the $\gate$ is non-negative, this sequence must be monotonic. Given the fixed-precision arithmetic, each dimension can only take on a finite number of distinct values. Therefore, as $t$ increases, each dimension must eventually stabilize to a constant value. This implies that there exists some $N$ such that for all $n \geq N$, the output of the first layer remains constant. The same argument applies inductively to each subsequent layer, leading to the conclusion that the entire SSM's output stabilises after a finite number of repetitions of the input symbol.
\end{proof}

The following is a direct consequence of Lemma~\ref{lem:diag_monotonic} and the fact that $(aa)^*$ is not monotonic in the sense that there are words in the language that can be extended to a word not belonging to the language and vice-versa.

\begin{theorem}
\label{thm:diag-upper}
No diagonal SSM with fixed-precision can recognise $(aa)^*$.
\end{theorem}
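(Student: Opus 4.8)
The plan is to argue by contradiction, using Lemma~\ref{lem:diag_monotonic} applied to the single-letter word $\sigma = a$. Suppose some diagonal SSM $\mathcal{S}$ with fixed-precision arithmetic satisfies $L(\mathcal{S}) = (aa)^*$. The first step is to make precise that acceptance of a word of the form $a^n$ is a function of the final output vector alone: processing $a^n$ means iterating the $L$ layers on the constant embedded sequence $\emb(a)\cdots\emb(a)$ of length $n$, and the acceptance test $\mathcal{S}(a^n) = 1$ together with the fixed global map $\out$ yields a fixed predicate $\mathrm{Acc}\colon \mathbb{F}^d \to \{0,1\}$ such that $a^n \in L(\mathcal{S}) \iff \mathrm{Acc}(f_a(n)) = 1$, where $f_a$ is exactly the function from Lemma~\ref{lem:diag_monotonic}. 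So membership of $a^n$ depends on $n$ only through $f_a(n)$.

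Next I would invoke Lemma~\ref{lem:diag_monotonic}: there is an $N \in \mathbb{N}$ with $f_a(n) = f_a(N)$ for all $n \geq N$, hence $\mathrm{Acc}(f_a(n))$ is a single constant $\beta \in \{0,1\}$ for all $n \geq N$. Now pick any even $m \geq N$, say $m = 2N$. Since $a^m \in (aa)^*$ we get $\beta = \mathrm{Acc}(f_a(m)) = 1$; since $a^{m+1} \notin (aa)^*$ and $m+1 \geq N$ we get $\beta = \mathrm{Acc}(f_a(m+1)) = 0$. This contradiction shows that no diagonal SSM with fixed-precision can recognise $(aa)^*$. (Equivalently, one phrases this as the general "non-monotonicity" argument: any language recognised by such an SSM must, for each letter $\sigma$, eventually have all of $\sigma^n$ in or all out of the language; $(aa)^*$ violates this for $\sigma = a$.)

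Since the substantive content — that each hidden dimension evolves monotonically under a non-negative diagonal gate and, over a finite value set, must stabilise, layer by layer — is already packaged in Lemma~\ref{lem:diag_monotonic}, the only real care required here is the bookkeeping of the first step: verifying that ``$\mathcal{S}$ accepts $a^n$'' factors through $f_a(n)$ and that feeding $a^n$ to $\mathcal{S}$ is literally the iterated-layer computation on a constant input to which the lemma applies. I expect this mild reduction to be the ``main obstacle,'' and everything after it is immediate.
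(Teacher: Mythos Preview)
Your proposal is correct and follows essentially the same approach as the paper's own proof: assume a diagonal fixed-precision SSM recognises $(aa)^*$, invoke Lemma~\ref{lem:diag_monotonic} with $\sigma=a$ to get eventual stabilisation of $f_a$, and derive a contradiction since $(aa)^*$ contains $a^n$ for arbitrarily large even $n$ but not odd $n$. Your additional bookkeeping step---making explicit that acceptance of $a^n$ factors through $f_a(n)$---is a detail the paper leaves implicit, but otherwise the arguments coincide.
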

\begin{proof}
    Assume that there exists a diagonal SSM $\mathcal{S}$ recognising $(aa)^*$. By Lemma~\ref{lem:diag_monotonic}, there exists a number $N$ such that for all $n\geq N$: $f_a(n)= f_a(N)$. This means that $\mathcal{S}$ either accepts or rejects all words of the form $a^{n}$ with $n\geq N$, contradicting the assumption.
\end{proof}

Lemma \ref{lem:diag_monotonic} showed that diagonal SSM behave montonically for each input symbol. With diagonal SSM, even though they have a monotonic behaviour for each input symbol, they can still ``reset'' the hidden state when seeing a different symbols and thus evaluate the \emph{since}-operator. 

Having characterised the expressiveness of diagonal SSM working over fixed-precision, we now turn to diagonal SSM working over $\log$-precision. In this setting, diagonal SSM can also recognise non-regular and even non-context-free languages. This was already observed by \cite{sarrofExpressiveCapacityState2024a}. Also, \cite{alsmann2025computationalcomplexitysatisfiabilitystate} showed that the satisfiability problem for this class of SSM is undecidable. We show that diagonal SSM working over $\log$-precision can recognise all languages definable in $\pLTLf$ extended by the \emph{backward-looking} counting operator $\backhash$.

\begin{theorem}
\label{thm:diag-ltl-back}
Diagonal SSM with $\log$-precision recognise all languages definable in $\pLTLf[\backhash]$.
\end{theorem}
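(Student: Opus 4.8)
The plan is to build directly on the layered simulation from the proof of Theorem~\ref{thm:diag-ltl}. As there, I would order the subformulas of a given $\pLTLf[\backhash]$ formula $\varphi$ by nesting depth, devote one hidden dimension to each subformula, and process one nesting level per SSM layer, so that after layer $d$ every hidden state is a boolean vector recording, for the corresponding position, the truth values of all subformulas of depth $\le d$; the position-wise FNN $\phi$ and the increment function carry these values forward and evaluate the non-temporal connectives ($\neg$, $\land$), and the gates implement $\yesterday$ and $\previously$ exactly as before. The only new syntactic case is the counting comparison $\sum_j a_j \backhash\varphi_j \sim c$, so the work is to handle that case and to verify that $\log$-precision is enough.

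For the counting case, note that a counting subformula has nesting depth $\max_j \nd(\varphi_j)+1$, so each $\varphi_j$ is evaluated in an earlier layer and its indicator $[\sigma,t\models\varphi_j]$ appears as a coordinate of the layer input $\bs{x}_t$. I would dedicate, for each $j$, one \emph{counter} dimension whose gate entry is the constant $1$ (a legal non-negative --- in fact time-invariant --- diagonal entry) and whose increment equals that indicator coordinate; the recurrence then reads $(\bs{h}_t)_j = (\bs{h}_{t-1})_j + [\sigma,t\models\varphi_j]$, so this dimension holds the prefix count $\backhash\varphi_j$ at position $t$. The linear combination $\sum_j a_j\backhash\varphi_j$ is an affine function of these coordinates, computed either by the next increment function or by $\phi$; and since all counter values are integers, the threshold test $\sim c$ is realised \emph{exactly} by a constant-size ReLU network --- e.g.\ $[\,x\ge c\,] = \relu(x-c+1)-\relu(x-c)$, with the cases ${<},{\le},{>}$ symmetric and ${=}$ the conjunction of ${\le}$ and ${\ge}$. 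The resulting boolean is written to a fresh dimension and passed on, so it behaves like an atomic proposition for any temporal, boolean or counting subformula above it.

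The precision accounting is where $\log$-precision enters, and it is straightforward: each prefix count is at most the input length $n$; each combination $\sum_j a_j\backhash\varphi_j$ has absolute value at most $(\sum_j|a_j|)\,n = \poly(n)$ because the $a_j$ and $c$ are constants of the fixed formula; and every gate, increment and FNN step in the construction is affine with constant coefficients, so no intermediate value leaves the range $\poly(n)$. Hence $O(\log n)$ bits suffice, and --- the crucial point --- no saturation ever occurs, which is exactly how the monotonicity obstruction of Lemma~\ref{lem:diag_monotonic} that blocks the fixed-precision model is side-stepped. The layer count remains $\nd(\varphi)$ and the width remains $O(|\varphi|)$.

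The main obstacle I anticipate is not a single hard idea but the bookkeeping needed to make the simulation airtight: I would have to check that the per-position comparison value produced by the residual/output FNN agrees with the semantics of $\backhash$ and can be reused unchanged in higher layers, that the ReLU simulation of the integer threshold tests is robust under the saturated, per-step-fixed but length-growing precision regime, and that the final acceptance bit read off at position $n$ equals $[\sigma,n\models\varphi]$. Once this scaffolding is in place, the $a^nb^nc^n$ formula displayed in Section~\ref{sec:diagonal} is an immediate instance, witnessing that the construction genuinely reaches beyond the context-free languages. The full details are deferred to the appendix.
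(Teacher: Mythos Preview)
Your proposal is correct and follows essentially the same approach as the paper: extend the layered subformula simulation of Theorem~\ref{thm:diag-ltl} by handling the counting case with a constant diagonal gate entry $1$ that accumulates prefix counts, then use a ReLU FNN to perform the integer threshold test, with $\log$-precision ensuring no overflow. The only cosmetic difference is that the paper accumulates the weighted sum $\sum_j a_j\backhash\varphi_j$ directly in a single dimension (via $B=I+\sum_j a_j C^{(\iota(\psi)\gets\iota(\varphi_j))}$), whereas you keep one counter per $\varphi_j$ and combine them afterwards; both are equivalent and yield the same $O(|\varphi|)$ width and $\nd(\varphi)$ depth.
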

\begin{proof}[Proof sketch]
    The detailed proof can be found in Appendix \ref{app:diag-ltl}. The proof is an extension of the proof of Theorem \ref{thm:diag-ltl}. Given a counting subformula $\sum_{j=1}^k a_j \cdot \backhash \varphi_j \sim c$, we add one layer to the SSM which counts the occurrences of each $\varphi_j$ in the hidden state using the increment function. The position-wise FNN then checks whether their linear combination satisfies the comparison with $c$ and writes the result into the corresponding dimension of the hidden state. The rest of the construction remains unchanged. As this construction does not require any non-diagonal gates, it also works for time-invariant diagonal SSM.
\end{proof}

\section{Expressive Power of Time-Invariant SSM}
\label{sec:time-invariant}

As established in the previous section, diagonal SSM can evaluate all formulas in $\pLTLf$. However, the \emph{since}-operator inherently requires input-dependent gates. This raises the question of whether time-invariant SSM, which have constant gates, can evaluate the \emph{since}-operator. We conjecture that they cannot, and thus cannot recognise all languages definable in $\pLTLf$.

\begin{conjecture}
\label{conj:time-invariant_since}
No time-invariant SSM with fixed-precision can recognise $L(a\since b)$ over $\Sigma=\{a,b,c\}$.
\end{conjecture}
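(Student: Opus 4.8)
A natural route is to derive the conjecture from a matching \emph{upper} bound for time-invariant SSM over fixed precision --- ideally the converse of Theorem~\ref{thm:timeinvariant-ltl}, that every such language is definable in $\uLTLf[\MOD]$, which would also make that theorem tight. This suffices because $L(a\since b)$ is exactly $\Sigma^* b a^*$ --- the words whose last letter from $\{b,c\}$ is a $b$ --- and this language is \emph{not} definable in $\uLTLf[\MOD]$. Intuitively, recognising it requires locating the last $\{b,c\}$-letter, whose position is neither bounded nor periodic, so neither bounded $\yesterday$-nesting nor modular predicates can pin it down, while $\previously$ is merely existential over the past. This can be made rigorous by an Ehrenfeucht--Fra\"{\i}ss\'{e}-type argument for $\uLTLf[\MOD]$: for operator nesting depth $k$ and moduli dividing $m$, Duplicator wins on $(bc)^{n}b\,a^{n}$ (in the language) versus $(bc)^{n}c\,a^{n}$ (not in the language) once $n$ exceeds $k$, since the two words differ only at a single position deep in the past --- beyond the reach of any $\yesterday$-chain --- while all $\previously$- and $\MOD$-atoms take identical values at every relevant position. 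The $c$ in the alphabet is essential here: over $\{a,b\}$ the language degenerates to ``contains a $b$'', which \emph{is} in $\uLTLf$.

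The SSM side rests on two structural facts. First, over fixed precision every layer has a finite hidden-state space, and reading a repeated symbol yields an \emph{eventually periodic} state sequence; this is the natural strengthening of Lemma~\ref{lem:diag_monotonic} to the time-invariant case, where periodicity (rather than constancy) is forced because the gate $A$ may have a permutation component, exactly as exploited in Lemma~\ref{lem:mod}. Second, time-invariance makes the per-symbol transition maps rigid: in the first layer, reading a symbol $\sigma$ updates the state by $\bs{h}\mapsto\mathrm{round}(A\bs{h}+\inc(\emb(\sigma)))$, so the ``decay'' of old state under $A$ is symbol-independent --- reading a $c$ applies exactly the same $A$ as reading an $a$, hence a $c$ can never \emph{overwrite} the memory of an earlier $b$, only add a bounded amount to it. For a single layer this already isolates the obstruction: for the state after reading $b$ vs.\ $c$ to be ``reset'' into disjoint regions that are then preserved by arbitrarily many $a$'s, the common linear part $A$ would have to both contract (to forget the previous state) and be non-contracting along some direction (to maintain the accept/reject flag under $g_a$), and over fixed precision this degenerates: a contracting direction stabilises every orbit to a single point, so $b a^{m}$ and $c a^{m}$ become indistinguishable for large $m$.

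To turn this into a proof I would argue directly on $\Sigma^* b a^*$. Consider the families $v\,b\,a^{m}$ (always accepted) and $v\,c\,a^{m}$ (always rejected), with $v$ ranging over words ending in $a$ and $m\ge 0$; the reachable hidden states after $v\,b$ resp.\ $v\,c$ form finite sets on which the map ``read $a$'' acts, and by eventual periodicity one aims to show that these two orbit systems are forced to meet --- equivalently, to exhibit two words, one required to be accepted and one required to be rejected after padding with $a$'s, that reach the same hidden state of the (single-layer) SSM, contradicting determinism. Combining this with the output-function constraints coming from words ending in $b$ or $c$ should close the single-layer case.

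The main obstacle --- and the reason the statement is left open --- is extending any of this through several layers. At layers beyond the first the ``increment'' at position $t$ is no longer a function of the symbol $\sigma_{t}$ alone but of the whole prefix $\sigma_{1}\cdots\sigma_{t}$ via the earlier layers, so the clean affine picture dissolves; moreover, changing a far-past $b$ into a $c$ perturbs the \emph{entire} subsequent ``read $a$'' orbit, hence many positions of the next layer's input sequence, so a ``single differing position'' argument within one layer does not lift, and each layer may re-summarise its input in ways that are hard to control given arbitrary matrices $A_{1},\dots,A_{L}$ and arbitrary ReLU-FNNs. Finally, the tempting monoid-theoretic shortcut is unavailable: $\Sigma^* b a^*$ already lies in the variety $\mathbf{DA}$ (indeed in $\mathrm{FO}^{2}[<]$), so it is not excluded by any variety that contains the transition monoids of general time-invariant SSM; the real obstruction lives in the strictly smaller --- and far less cleanly characterised --- class of pure-past \emph{unary} definability, which is precisely why a full proof remains elusive.
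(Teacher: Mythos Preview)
The statement is a \emph{conjecture} in the paper, not a theorem; the paper offers only an intuition and candidate separating words $a^n(ca^nba^n)^m$ versus $a^n(ba^nca^n)^m$, explicitly flagging the multi-layer case and the non-associativity of saturated arithmetic as the obstacles to a proof. Your proposal likewise stops short of a proof and correctly identifies the multi-layer extension as the crux, so in that respect you and the paper are aligned.

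Your overall strategy --- reduce to the converse of Theorem~\ref{thm:timeinvariant-ltl} and then show $\Sigma^*ba^*\notin\uLTLf[\MOD]$ --- is more ambitious than the paper's direct attack and would yield a stronger result if it worked. However, your EF witnesses for the non-definability half are wrong: $(bc)^nba^n$ and $(bc)^nca^n$ are already separated by the depth-$2$ $\uLTLf$ formula $\previously(c\wedge\yesterday c)$, since only the second word contains two consecutive $c$'s (at positions $2n$ and $2n{+}1$). A correct pair must avoid introducing any new local pattern at the differing position; interestingly, the paper's own candidates --- single $b$/$c$ letters padded by $a^n$ on both sides --- do have this property and would serve your EF argument better (choosing $n{+}1$ divisible by all relevant moduli also neutralises the $\MOD$ atoms).

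Your single-layer sketch also glosses over precisely the saturation effects the paper names as a principal difficulty: with saturated fixed-precision arithmetic the per-step map $\bs h\mapsto A\bs h+\inc(\emb(\sigma))$ is no longer affine, so the clean ``contracting versus non-contracting direction'' dichotomy does not apply, and it is not obvious that the $b$-orbit and the $c$-orbit under iterated $a$'s must eventually merge. The paper does not resolve this either, so your single-layer discussion reaches roughly the same depth as the paper's, modulo the different (and in your case flawed) choice of witness words.
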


The intuition behind this conjecture is that the \emph{since}-operator requires combining information from the current input with information from previous positions in a way that depends on the current input. While time-invariant SSM can ``reset'' the hidden state when seeing a different symbol, time-invariant cannot adapt their $\gate$ based on the current input symbol. Due to this limitation, the only difference between seeing a $b$ or $c$ in a word, lies within the additive $\inc$-part of the SSM. Our suspicion is that time-invariant SSM trying to recognise $a\since b$ will eventually reach a point, where they loose track of whether the last non-$a$ symbol was a $b$ or a $c$. Good candidates for words where this happens are $\sigma=a^n (ca^n b a^n)^m$ and $\sigma '=a^n (ba^n c a^n)^m$ for large $n,m$. A formal proof of this conjecture however is difficult for two reasons. First, the behaviour of time-invariant SSM is more complex than that of diagonal SSM, as the dimensions of the hidden state can interact with each other. Second, due to saturation effects in fixed-precision arithmetic and the fact the fixed-precision arithmetic is not associative, the behaviour of time-invariant SSM is difficult to analyse.

As the \textit{since}-operator is the only temporal operator which needs an input-dependent diagonal gate we get the following immediately for the case in which the SSM layers are both diagonal and time-invariant. Without the \textit{since}-operator, diagonal and time-invariant SSM can recognise languages in the unary fragment of $\pLTLf$.

\begin{corollary}
\label{cor:ti_diag}
SSM with layers that are both diagonal and time-invariant can recognise all languages definable in $\uLTLf[\backhash]$ under $\log$-precision arithmetic, and those definable in $\uLTLf$ under fixed-width arithmetic.
\end{corollary}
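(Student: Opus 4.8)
The plan is to derive Corollary~\ref{cor:ti_diag} as a straightforward specialisation of the constructions behind Theorem~\ref{thm:diag-ltl} and Theorem~\ref{thm:diag-ltl-back}, observing that the only place those constructions use an input-dependent gate is in the handling of the \emph{since}-operator, and that $\uLTLf[\backhash]$ (resp.\ $\uLTLf$) is exactly the fragment obtained by dropping $\since$. First I would recall that in the proof of Theorem~\ref{thm:diag-ltl} each subformula $\psi$ of the input formula $\varphi$ is assigned one dimension of the hidden state, layers are ordered by nesting depth, and within a layer: (i) the gate $\gate$ realises the temporal step, (ii) the increment $\inc$ injects the (already computed) values of immediate subformulas, and (iii) the position-wise FNN $\phi$ evaluates the Boolean connectives $\neg, \land$ and the counting comparisons. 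For the operators actually present in the unary fragment, $\yesterday\psi$ is a pure shift-by-one, implemented by a constant (time-invariant) permutation-style gate that copies the previous value of the $\psi$-dimension, and $\previously\psi$ is a running OR, implemented by a constant gate with a $1$ on the relevant diagonal entry together with $\inc$ adding $\llbracket\psi\rrbracket$ and the FNN saturating/clamping the accumulator back to a Boolean. Neither of these requires the gate to depend on $\bs{x}_t$, so every gate used is simultaneously diagonal and time-invariant.

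Next I would handle the counting layer from Theorem~\ref{thm:diag-ltl-back} and note the authors already flag the relevant fact: a subformula $\sum_j a_j\backhash\varphi_j \sim c$ is realised by a layer whose gate is the identity (so in particular diagonal and time-invariant), whose increment accumulates the counts of each $\varphi_j$, and whose FNN tests the linear comparison against $c$ and writes a Boolean result. Combining this with the previous paragraph, every layer produced by the construction for a $\uLTLf[\backhash]$ formula uses only diagonal-and-time-invariant gates; restricting further to $\uLTLf$ (no counting subformulas) never invokes the log-precision counting layer, so the resulting SSM lives over fixed-width arithmetic. I would then remark that correctness — each hidden state remains a Boolean vector after every layer, and the designated output dimension at the last position equals $\llbracket\varphi\rrbracket$ — is inherited verbatim from the proofs of Theorems~\ref{thm:diag-ltl} and~\ref{thm:diag-ltl-back}, since we have only removed cases, not modified the surviving ones. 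This yields $L(\varphi)=L(S_\varphi)$ with $S_\varphi$ a diagonal-and-time-invariant SSM of the claimed precision, which is exactly the statement.

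I do not expect a genuine obstacle here: the corollary is essentially bookkeeping over which gate types appear in the earlier constructions. The one point needing a little care is making explicit that the FNN steps handling $\previously$ and the counting comparisons stay within fixed-width precision in the $\uLTLf$ case — i.e.\ that the running-OR accumulator for $\previously$ can be clamped to $\{0,1\}$ using finitely many bits — which follows from the saturation assumption on fixed-precision arithmetic stated in the Fundamentals section. A second cosmetic point is that a pure shift (for $\yesterday$) is a constant $0/1$ matrix and hence both diagonal-entry-wise manipulable and input-independent in the sense required; since the paper's diagonal-and-time-invariant class permits constant gates with nonnegative entries, this is within scope. Beyond these remarks, the proof is a one-line appeal: apply the constructions of Theorems~\ref{thm:diag-ltl} and~\ref{thm:diag-ltl-back} to a $\since$-free formula and inspect that no input-dependent gate is ever used.
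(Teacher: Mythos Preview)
Your proposal is correct and matches the paper exactly: the corollary is stated without a separate proof, only the preceding remark that the \emph{since}-operator is the sole temporal operator requiring an input-dependent gate, so the constructions of Theorems~\ref{thm:diag-ltl} and~\ref{thm:diag-ltl-back} apply verbatim to $\since$-free formulas. One cosmetic caveat: your description of $\yesterday$ as a ``permutation-style'' shift is not quite the paper's construction (a permutation matrix is not diagonal); the actual gate used is a constant diagonal entry $\tfrac{1}{4}$ together with an FNN decoder (Lemma~\ref{lem:prev_encoding}, Case~6 of Lemma~\ref{lem:ltl_layers}), but since you ultimately defer correctness to those proofs this does not affect your argument.
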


While being possibly less expressive than diagonal SSM, time-invariant and non-diagonal SSM can still recognise languages which are not definable in $\pLTLf$. We demonstrate this by showing that time-invariant SSM can compute modular predicates about sequence positions. For any position $t$ in a sequence, modular predicates determine whether $t$ has a specific remainder when divided by some modulus $m$. This ability originates from the possibility of time-invariant SSM to maintain a counter modulo $m$ in their hidden state using a cyclic permutation matrix as the gate.

\begin{lemma} 
    \label{lem:mod}
    For any integer $m \geq 2$, there exists a time-invariant SSM layer $l_{\MOD^m}$ that outputs $\bs{e}_{r}\in\reals^d$ at position $t$ if and only if $t \equiv r \pmod{m}$.
\end{lemma}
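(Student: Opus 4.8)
The plan is to realise the counter with a single constant cyclic permutation gate, so that the hidden state simply rotates through the standard basis vectors of $\reals^m$ and lands on $\bs{e}_{t \bmod m}$ at step $t$, independently of the actual input symbols. Concretely, I would set $d = m$ and index the coordinates of $\reals^d$ by the residues $\{0,1,\dots,m-1\}$. Let $P \in \reals^{m\times m}$ be the cyclic permutation matrix defined by $P\bs{e}_j = \bs{e}_{(j+1)\bmod m}$. Take $\gate(\bs{x}) = P$ for every $\bs{x}$, which is an admissible gate for a time-invariant layer since it is a single fixed matrix; set the increment to the constant $\inc \equiv \bs{0}$ and the initial state to $\bs{h}_0 = \bs{e}_0$. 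With these choices the recurrence collapses to $\bs{h}_t = P\,\bs{h}_{t-1}$, so a one-line induction gives $\bs{h}_t = P^{t}\bs{e}_0 = \bs{e}_{t\bmod m}$ for all $t \geq 1$.

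Next I would let the output function be $\phi(\bs{h},\bs{x}) = \bs{h}$ (ignoring the second argument), which is expressible by a ReLU FNN via the coordinate-wise identity $v = \relu(v) - \relu(-v)$. Then the layer emits $\bs{z}_t = \bs{e}_{t\bmod m}$, which equals $\bs{e}_r$ exactly when $t \equiv r \pmod m$, matching the semantics of $\MOD_r^m$. If a construction that later consumes this layer needs a wider hidden dimension $d > m$, one just pads $P$, $\bs{h}_0$, and $\phi$ with zero rows and columns; nothing changes.

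There is essentially no hard obstacle here. Since every entry of $P$ and of every $\bs{h}_t$ is $0$ or $1$ and no two nonzero numbers are ever added in the recurrence, the computation is exact even under fixed-precision saturating arithmetic, so no rounding or overflow argument is needed (and it trivially carries over to log-precision). The only point requiring any care is the off-by-one bookkeeping induced by the choice of $\bs{h}_0$: one should check that position $1$ yields $\bs{e}_1$, position $m$ yields $\bs{e}_0$, and in general position $t$ yields $\bs{e}_{t\bmod m}$, which is precisely the intended meaning of $\MOD^m$ — and this verification is immediate.
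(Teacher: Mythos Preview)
Your proposal is correct and follows essentially the same construction as the paper: a constant cyclic permutation matrix as the gate, zero increment, and a standard basis vector as the initial state, so that the hidden state cycles through $\bs{e}_{t \bmod m}$. The only differences are cosmetic (you use $0$-indexed coordinates and $\bs{h}_0=\bs{e}_0$ where the paper uses $1$-indexed coordinates and $\bs{h}_0=\bs{e}_1$), and you add some helpful remarks on $\phi$, fixed-precision exactness, and padding that the paper leaves implicit.
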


\begin{proof} 
    We construct the SSM layer $l_{\text{MOD}^m}$ as follows. The initial state is $\bs{h}_0 = \bs{e}_1$, the first standard basis vector. The gate matrix $P$ is the cyclic permutation matrix that maps $\bs{e}_i$ to $\bs{e}_{i+1}$ for $i = 1, \ldots, m-1$ and $\bs{e}_m$ to $\bs{e}_1$. Specifically, all of $P$'s entries are zero except for ones on the subdiagonal and in position $(1,m)$: 
      \[
      P = \begin{pmatrix}
  0 & 0 & 0 & \cdots & 0 & 1 \\
  1 & 0 & 0 & \cdots & 0 & 0 \\
  0 & 1 & 0 & \cdots & 0 & 0 \\
  \vdots & \vdots & \ddots & \ddots & \vdots & \vdots \\
  0 & 0 & 0 & \cdots & 1 & 0
  \end{pmatrix}\ .
  \]
    The increment matrix $B$ is set to zero, since the predicate depends only on the position and not on the input. The key insight is that the hidden state evolves as $\bs{h}_t = P^t \bs{h}_0 = P^t \bs{e}_1$. Since $P$ is a cyclic permutation of order $m$, we have $P^m = I$, and thus $P^t \bs{e}_1 = \bs{e}_{(t \bmod m) + 1}$. This means that, at each position $t$, the hidden state is exactly the standard basis vector whose index encodes the remainder of $t$ modulo $m$. At position $t$, the state $\bs{h}_t$ has a 1 in position $(t \bmod m) + 1$ and 0's elsewhere.
\end{proof}

\begin{example}
   To illustrate the concept of modular predicates, consider $\MOD_1^2$, which determines whether a position is odd. Here $m = 2$, so the permutation matrix is $P = \left( \begin{smallmatrix} 0 & 1 \\ 1 & 0 \end{smallmatrix} \right)$ and the initial state is $\bs{h}_0 = (1,0)^T$. The state sequence alternates between $\bs{e}_1= (0,1)^T$ at odd positions and $\bs{e}_0=(1,0)^T$ at even positions.
\end{example}

The cyclic permutation effectively maintains a counter modulo $m$ in the hidden state, enabling the computation of any modular predicate with a single time-invariant SSM layer. This especially allows non-star-free languages to be defined, e.g.\ $L(\history a \land \MOD_0^2)=(aa)^*$, which are not definable in $\pLTLf$, cf.\ \cite{straubingFiniteAutomataFormal1994}. In this sense diagonal and time-invariant SSM are incomparable in expressiveness, as diagonal SSM cannot recognise $(aa)^*$ and time-invariant SSM seem to be unable to recognise $L(a \since b)$.

\begin{theorem}
\label{thm:timeinvariant-ltl}
Time-invariant SSM with fixed-precision recognise all languages definable in $\uLTLf[\MOD]$ and time-invariant SSM with $\log$-precision recognise all languages definable in $\uLTLf[\MOD, \backhash]$.
\end{theorem}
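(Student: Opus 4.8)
The plan is to reuse the layered, nesting-depth decomposition from the proof of Theorem~\ref{thm:diag-ltl}, adding exactly one new layer type for the modular predicates. Given a $\uLTLf[\MOD]$ formula $\varphi$, I would order its subformulas so that each is processed after all of its own subformulas, allocate one hidden-state dimension per subformula, and group subformulas of equal nesting depth into a common layer. The boolean connectives and the temporal operators $\yesterday$ and $\previously$ are realised exactly as in Theorem~\ref{thm:diag-ltl}: negations, conjunctions and (in the log-precision case) threshold comparisons are computed position-wise by the output FNN from the truth values already present in the layer input; $\yesterday\psi$ copies the previous position's value of $\psi$ through a single off-diagonal $1$ in the gate with the matching diagonal entry zeroed; and $\previously\psi$ accumulates the truth values of $\psi$ via a diagonal $1$ in the gate together with an increment equal to the current value of $\psi$, after which the FNN tests ``$\geq 1$''. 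As noted for Corollary~\ref{cor:ti_diag}, all of these gates are simultaneously diagonal \emph{and} time-invariant.

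The one genuinely new ingredient is $\MOD_r^m$, which has nesting depth $0$ and so must be produced in the first layer alongside the atomic propositions. For each modulus $m$ occurring in $\varphi$ I would reserve a block of $m$ coordinates, place the cyclic permutation matrix $P_m$ of Lemma~\ref{lem:mod} into the corresponding diagonal block of the (constant) gate matrix, set the increment on these coordinates to $0$, and set the initial state within the block to $\bs{e}_1$. By Lemma~\ref{lem:mod} the $m$-block of the hidden state at position $t$ is then the one-hot encoding of $t \bmod m$, so the output FNN recovers $\MOD_r^m$ by a single equality test on that block. The full gate matrix of the first layer is the direct sum of these permutation blocks with the identity/zero blocks used to route the atomic propositions; since $P_m$ does not depend on the input, this layer is still time-invariant, though no longer diagonal. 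The final layer's FNN outputs $1$ iff the dimension for $\varphi$ carries $1$, which gives $L(S_\varphi) = L(\varphi)$.

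For the log-precision statement I would additionally handle each counting subformula $\sum_j a_j \backhash\varphi_j \sim c$ exactly as in Theorem~\ref{thm:diag-ltl-back}: one further layer accumulates each $\backhash\varphi_j$ via a diagonal $1$ in the gate and the current value of $\varphi_j$ as increment, and the FNN checks the linear comparison with $c$; here $O(\log n)$ bits suffice because the counts never exceed the input length $n$. This layer is again diagonal and time-invariant, so the whole construction stays within the time-invariant class.

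I do not expect a conceptual obstacle: the theorem is essentially an assembly of Theorem~\ref{thm:diag-ltl}, Theorem~\ref{thm:diag-ltl-back} and Lemma~\ref{lem:mod}. The main things to get right are bookkeeping. First, all blocks of a layer must share one \emph{constant} gate matrix; this is immediate from the block-diagonal form, but it is precisely what forces the modular counters and the $\yesterday$/$\previously$ machinery to be input-independent — which is also why the construction does not extend to the \emph{since}-operator. Second, under saturated fixed-precision arithmetic the accumulator for $\previously$ may overflow, but this is harmless since only ``$\geq 1$'' is ever tested and saturation is monotone (in the log-precision case the exact counts fit into $O(\log n)$ bits). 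Third, the off-by-one indexing convention of Lemma~\ref{lem:mod} must be kept consistent between the gate construction and the read-out inside the FNN. The hard part will be merely presenting this cleanly rather than discovering anything new.
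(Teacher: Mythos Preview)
Your proposal is correct and matches the paper's approach: reuse the layer-by-nesting-depth construction from Theorems~\ref{thm:diag-ltl} and~\ref{thm:diag-ltl-back}, and bolt on one additional time-invariant layer realising the modular predicates via the cyclic permutation of Lemma~\ref{lem:mod} (the paper first takes the lcm of all occurring moduli so that a single permutation block suffices, whereas you keep one block per modulus---equally valid). One small slip: your off-diagonal realisation of $\yesterday$ is fine for a time-invariant gate but is \emph{not} diagonal, so it contradicts your remark that ``all of these gates are simultaneously diagonal and time-invariant''; the paper avoids this by using the $\tfrac14$-encoding of Lemma~\ref{lem:prev_encoding}, which keeps the $\yesterday$-gate diagonal as well.
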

\begin{proof}[Proof sketch]
    The detailed proof can be found in Appendix \ref{app:time-invariant}. It is similar to the proof of Theorem \ref{thm:diag-ltl}. The main difference is that we need to add one layer for each modular predicate $\MOD_r^m$ appearing in the formula. This layer is constructed as described in Lemma \ref{lem:mod}. The rest of the construction remains unchanged. As this construction does not require any non-diagonal gates, it also works for time-invariant diagonal SSM.
\end{proof}

Having established lower bounds for diagonal and for time-invariant SSM, it seems natural to consider SSM that are either diagonal or time-invariant in each layer. Mixed SSM can evaluate both the \emph{since}-operator and modular predicates. This allows us to show that mixed SSM with fixed-precision can recognise all languages in $\pLTLf[\MOD]$ which are exactly the regular languages in $\text{AC}^0$, cf.\ \cite{straubingFiniteAutomataFormal1994}, the class of languages recognised by constant-depth polynomial-size circuits with unbounded fan-in. This class contains all star-free languages as well as languages like $(aa)^*$. The same holds for mixed SSM with $\log$-precision and $\pLTLf[\MOD, \backhash]$.

\begin{corollary}
    \label{cor:mixed}
    Mixed SSM with fixed-precision recognise all regular languages in $\text{AC}^0$. With $\log$-precision they recognise all regular languages in $\pLTLf[\backhash, \MOD]$.
\end{corollary}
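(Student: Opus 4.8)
The plan is to derive Corollary~\ref{cor:mixed} by \emph{composing} the two construction techniques already developed: the diagonal construction of Theorem~\ref{thm:diag-ltl} (resp.\ Theorem~\ref{thm:diag-ltl-back}), which handles every temporal and counting operator of $\pLTLf$ (resp.\ $\pLTLf[\backhash]$) using diagonal layers, and the time-invariant layer of Lemma~\ref{lem:mod}, which supplies modular position information. Since a mixed SSM may by definition stack diagonal and time-invariant layers in any order, the combined network is again a legal mixed SSM. Because $\pLTLf[\MOD]\equiv\text{FO}[<,\MOD]\equiv\text{REG}\cap\text{AC}^0$ \citep{straubingFiniteAutomataFormal1994} (see also Appendix~\ref{app:logic}), it suffices to show that mixed SSM with fixed precision recognise every language definable in $\pLTLf[\MOD]$, and that mixed SSM with $\log$-precision recognise every language definable in $\pLTLf[\backhash,\MOD]$.

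The first step I would take is to \emph{pre-compute} all modular atoms in a single bottom layer. Given a $\pLTLf[\backhash,\MOD]$ formula $\varphi$, let $m_1,\dots,m_q$ be the moduli occurring in $\varphi$ (finitely many). Since every atom $\MOD_r^{m}$ has nesting depth $0$, I place below the entire construction one time-invariant layer whose gate is the block-diagonal matrix $\mathrm{diag}(P_{m_1},\dots,P_{m_q})$, with $P_m$ the cyclic permutation matrix from Lemma~\ref{lem:mod}, whose increment is $\bs{0}$, and whose initial state is the concatenation of the first standard basis vectors of the $q$ blocks. As in the proof of Lemma~\ref{lem:mod}, $\bs{h}_t$ then has, in block $i$, the one-hot encoding of $t \bmod m_i$; a position-wise ReLU FNN reads off the truth values of all modular atoms of $\varphi$ and records them in fresh hidden-state dimensions (and passes the original input symbol through via a residual connection). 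From this point on the modular atoms behave exactly like ordinary input-dependent atomic propositions.

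With the modular bits available, I would then run the diagonal construction of Theorem~\ref{thm:diag-ltl} on top for the fixed-precision statement, or of Theorem~\ref{thm:diag-ltl-back} for the $\log$-precision statement: order the remaining subformulas of $\varphi$ by nesting depth and use one diagonal layer per level exactly as in those proofs --- $\yesterday$ and $\previously$ via simple (input-independent) diagonal gates, $\since$ via an input-dependent diagonal gate, Boolean connectives and, in the $\log$-precision case, the counting comparisons $\sum_j a_j\backhash\varphi_j\sim c$ via the increment functions and the position-wise FNNs. No step uses a non-diagonal gate, so the full network consists only of diagonal and time-invariant layers, i.e.\ is a mixed SSM, and its last output at the final position equals $1$ iff $\sigma,n\models\varphi$. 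Invoking $\pLTLf[\MOD]\equiv\text{REG}\cap\text{AC}^0$ yields the fixed-precision claim; the $\log$-precision claim is then immediate from the same argument applied to $\pLTLf[\backhash,\MOD]$.

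I expect the work here to be bookkeeping rather than a genuine obstacle. The only points that need care are that the hidden-state dimensions carrying the pre-computed modular bits (and, in the $\log$-precision case, the running counts of Theorem~\ref{thm:diag-ltl-back}) must survive unchanged through all subsequent diagonal layers --- enforced by fixing the corresponding gate entries to $1$ and the corresponding increments to $0$ --- and that the nesting-depth ordering must guarantee no layer overwrites information still needed downstream, which is exactly the ordering invariant already used in Theorem~\ref{thm:diag-ltl}. Unlike the situation in Lemma~\ref{lem:diag_monotonic} and Theorem~\ref{thm:diag-upper}, saturation causes no trouble: every non-counting dimension stays Boolean-valued, and every counting dimension stays within the $O(\log n)$ budget already established for Theorem~\ref{thm:diag-ltl-back}.
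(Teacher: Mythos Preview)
Your proposal is correct and matches the paper's approach: the paper likewise derives the corollary by stacking the time-invariant modular layer of Lemma~\ref{lem:mod} immediately after the embedding and then running the diagonal construction of Theorem~\ref{thm:diag-ltl} (resp.\ Theorem~\ref{thm:diag-ltl-back}) on top, invoking $\pLTLf[\MOD]\equiv\text{REG}\cap\text{AC}^0$ for the fixed-precision claim. One small bookkeeping slip: to propagate the position-dependent modular bits through subsequent diagonal layers you want gate~$=0$ and increment~$=I$ on those dimensions (so that $\bs{h}_t$ picks up $\bs{x}_t$ afresh at each position, as in Lemma~\ref{lem:ltl_layers}), not gate~$=1$ and increment~$=0$, which would freeze those coordinates at $\bs{h}_0$.
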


\section{Comparison to Transformer Architectures}
\label{sec:comparison}

\tikzstyle{box}= [rectangle, minimum width=2cm, minimum height=1cm, fill=white, inner sep=15pt]
\tikzstyle{redbox}= [rectangle, rounded corners, minimum width=2cm, minimum height=1cm, draw=red, fill=white]
\tikzstyle{sbox}= [ minimum width=0.8cm, minimum height=0.8cm,rectangle, rounded corners, draw=black, fill=white]
\tikzstyle{greybox}=[rectangle,fill=black!7]

\begin{figure}[t]
    \centering
    \begin{tikzpicture}[node distance=2cm, auto, scale=0.8, transform shape]
        \usetikzlibrary{backgrounds}
        \pgfsetlayers{background,main}
    
        \node[greybox,align=left] (uLTL) at (3,0){\begin{tabular}{ll}
\textbf{$\uLTLf$}  \\
  $< \text{diagonal time-invariant SSM}$ \tiny (Corr. \ref{cor:ti_diag})
\end{tabular}};

    \node[greybox,align=left] (FO) at (0,2){\begin{tabular}{ll}
    \textbf{$\text{FO}[<]$}  \\
      $\leq \text{diagonal SSM}$ \tiny(Thm.~\ref{thm:diag-ltl}) \\
      $\equiv \text{UHAT}$ \tiny (\cite{yang2024masked})
    \end{tabular}};

    \node[greybox,align=left] (uLTL+MOD) at (6,2){\begin{tabular}{ll}
    \textbf{$\uLTLf[\MOD]$}  \\
      $\leq \text{time-invariant SSM}$ \tiny (Thm.~\ref{thm:timeinvariant-ltl})\\
    \end{tabular}};

    \node[greybox,align=left] (REGAC0) at (3,4){\begin{tabular}{ll}
    \textbf{$\text{FO}[<, \MOD]$}  \\
      $\leq \text{mixed SSM}$ \tiny (Corr. \ref{cor:mixed})\\
      $\equiv \text{UHAT}+\text{PE}$ \tiny (\cite{yang2024masked})
    \end{tabular}};


    \path[-stealth, dashed, shorten <=1pt, shorten >=1pt] (uLTL)  edge (FO);
    \path[-stealth, dashed, shorten <=1pt, shorten >=1pt] (uLTL)  edge (uLTL+MOD);
    \path[-stealth, dashed, shorten <=1pt, shorten >=1pt] (FO)  edge (REGAC0);
    \path[-stealth, dashed, shorten <=1pt, shorten >=1pt] (uLTL+MOD)  edge (REGAC0);

    \end{tikzpicture}

    \caption{Detailed comparison of fixed-precision SSM variants with Unique Hard-Attention Transformers (UHAT) \cite{yang2024masked}.
    The diagram illustrates a structural alignment: Diagonal SSMs capture star-free languages ($\text{FO}[<]$), matching the capabilities of UHAT without positional encodings. Time-invariant layers introduce modular predicates, providing an expressive lift analogous to adding Positional Encodings to transformers. Consequently, mixed SSMs capture the full class $\text{FO}[<, \MOD]$ (regular languages in $\text{AC}^0$).
    }

    \label{fig:uhat}
\end{figure}
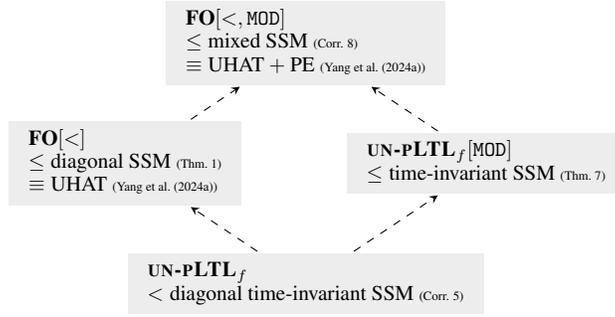

In this section we will discuss how the results on the expressiveness of SSM fit into the broader landscape of recent results on SSM expressiveness and the comparative expressivity of various transformer architectures. 

Transformer architectures studied in the formal expressiveness communities can be classified into two main categories: those with hard attention mechanisms and those with soft attention mechanisms. Hard attention transformers, e.g. unique hard-attention (UHAT) and average hard-attention (AHAT), use a discrete attention mechanism that allows them to focus on specific parts of the input sequence. In contrast, soft attention transformers (SAT), like the original transformer model by \cite{vaswaniAttentionAllYou2017} and its variants, employ a continuous attention mechanism that computes weighted averages over the entire input sequence. While unique hard-attention transformers can only attend to one position in the sequence, average hard-attention and soft-attention transformers can attend to multiple positions, allowing them to count occurrences of certain properties. This difference corresponds to the difference between fixed-precision and log-precision arithmetic in SSM. 

Additionally due to the nature of the attention mechanism, transformers cannot distinguish between different positions in a sequence without additional positional encodings. These encodings provide the model with information about the order and position of elements in the sequence. In contrast, SSM inherently encode positional information through their recurrent structure. But as observed in Section~\ref{sec:time-invariant}, allowing time-invariant SSM layers increases expressivity because the model can also maintain a counter modulo $m$ in its hidden state, enabling them to compute modular predicates about sequence positions. Just as allowing time-invariant SSM layers, adding positional encodings to transformers also allows them to compute modular predicates, cf.\  \cite{yang2024masked} and \cite{barcelo2024logical}.

\cite{yang2024masked} established that unique hard-attention transformers (UHAT) can recognise all languages definable in first-order logic with the order predicate ($\text{FO}[<]$) and that adding positional encodings (UHAT+PE) allows them to recognise all languages definable in $\text{FO}[<, \MOD]$. This aligns with our results on diagonal and time-invariant SSM working on fixed-precision, as can be seen in Figure~\ref{fig:uhat}. Analogously, \cite{barcelo2024logical} showed that average hard-attention transformers (AHAT) can recognise all languages definable in $\pLTLf[\backhash, \fronthash, \MOD]$. This aligns with our results on mixed SSM working on log-precision. The increase in power of AHAT is due to the transformers' ability to attend to all positions in the input sequence. \cite{yang2024counting} investigated the counting abilities of soft-attention transformers (SAT) by analysing a logic called C-RASP and its extension C-RASP[MOD]. C-RASP corresponds to a strict subset of $\uLTLf[\backhash]$, which we show in Appendix \ref{app:comparison}. Figure~\ref{fig:ahat} shows how our results embed into the existing literature.

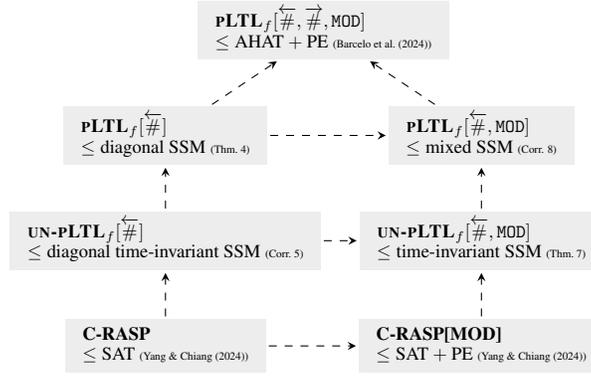
\begin{figure}[t]
    \centering

    \begin{tikzpicture}[scale=0.7, transform shape]

        \usetikzlibrary{backgrounds}
        \pgfsetlayers{background,main}
    
        \node[greybox,align=left] (C-RASP) at (0,0){\begin{tabular}{ll}
    \textbf{C-RASP}  \\
      $\leq \text{SAT}$ \tiny(\cite{yang2024counting})\\
    \end{tabular}};

        \node[greybox,align=left] (C-RASPM) at (6,0){\begin{tabular}{ll}
    \textbf{C-RASP[MOD]}  \\
      $\leq \text{SAT}+\text{PE}$ \tiny (\cite{yang2024counting})\\
    \end{tabular}};

    \node[greybox,align=left] (pLTLc) at (0,4){\begin{tabular}{ll}
    \textbf{$\pLTLf[\backhash]$}  \\
      $\leq \text{diagonal SSM}$ \tiny(Thm.~\ref{thm:diag-ltl-back})\\
    \end{tabular}};

    \node[greybox,align=left] (uLTLc) at (0,2){\begin{tabular}{ll}
    \textbf{$\uLTLf[\backhash]$}  \\
      $\leq \text{diagonal time-invariant SSM}$ \tiny(Corr.~\ref{cor:ti_diag})\\
    \end{tabular}};

    \node[greybox,align=left] (uLTLcm) at (6,2){\begin{tabular}{ll}
    \textbf{$\uLTLf[\backhash, \MOD]$}  \\
      $\leq \text{time-invariant SSM}$ \tiny(Thm.~\ref{thm:timeinvariant-ltl})\\
    \end{tabular}};

    \node[greybox,align=left] (pLTLcm) at (6,4){\begin{tabular}{ll}
    \textbf{$\pLTLf[\backhash, \MOD]$}  \\
      $\leq \text{mixed SSM}$ \tiny (Corr. \ref{cor:mixed})\\
    \end{tabular}};

    \node[greybox,align=left] (LTL+C+MOD) at (3,6){\begin{tabular}{ll}
    \textbf{$\pLTLf[\backhash, \fronthash, \MOD]$}  \\
      $\leq \text{AHAT}+\text{PE}$ \tiny (\cite{barcelo2024logical})\\
    \end{tabular}};


    \path[-stealth, dashed, shorten <=1pt, shorten >=1pt] (C-RASP)  edge (C-RASPM);
    \path[-stealth, dashed, shorten <=1pt, shorten >=1pt] (C-RASP)  edge (uLTLc);
    \path[-stealth, dashed, shorten <=1pt, shorten >=1pt] (C-RASPM)  edge (uLTLcm);
    \path[-stealth, dashed, shorten <=1pt, shorten >=1pt] (pLTLc)  edge (LTL+C+MOD);
    \path[-stealth, dashed, shorten <=1pt, shorten >=1pt] (pLTLcm)  edge (LTL+C+MOD);
    \path[-stealth, dashed, shorten <=1pt, shorten >=1pt] (pLTLc)  edge (pLTLcm);
    \path[-stealth, dashed, shorten <=1pt, shorten >=1pt] (uLTLc)  edge (pLTLc);
    \path[-stealth, dashed, shorten <=1pt, shorten >=1pt] (uLTLcm)  edge (pLTLcm);
    \path[-stealth, dashed, shorten <=1pt, shorten >=1pt] (uLTLc)  edge (uLTLcm);

    \end{tikzpicture}
    \caption{Comparative hierarchy under log-precision, enabling counting capabilities.
The diagram positions SSM variants relative to soft-attention (SAT) \citep{yang2024counting} and average hard-attention transformers (AHAT) \citep{barcelo2024logical}. While SSMs strictly subsume the logic of SAT (C-RASP) due to superior local temporal processing (e.g., the \emph{Yesterday} operator), they remain strictly less expressive than AHAT+PE.
This separation arises from causality: SSMs are limited to backward-looking counting ($\backhash$), whereas the global attention of AHAT allows for forward-looking counting ($\fronthash$).}
    \label{fig:ahat}
\end{figure}

\section{Outlook}

We established lower bounds on the expressiveness of various SSM architectures by demonstrating their ability to recognise languages defined by different fragments of temporal logic. The results, as visualised in Figure~\ref{fig:overview}, provide a comprehensive picture of the expressiveness hierarchy among SSM variants. While our analysis relies on explicit weight constructions, the resulting implications are robust to training methodology. Specifically, our separation results establish hard architectural limits: for instance, the monotonicity lemma implies that a diagonal fixed-precision SSM cannot be trained to recognise non-star-free languages like $(aa)^*$, regardless of the optimizer or dataset size. Furthermore, they reveal an interesting gap between the  lower bounds established here and the known upper bound of $\text{TC}^0$ established by \cite{merrillIllusionStateStatespace2024}. The language classes that are shown here to be recognisable by various restricted SSM architectures with fixed-width arithmetic lie within $\text{AC}^0$, a proper subset of $\text{TC}^0$. This suggests two possibilities: either the $\text{TC}^0$ upper bound can be tightened to $\text{AC}^0$ for these SSM architectures, or these SSM can actually recognise languages outside of $\text{AC}^0$, such as parity. The first possibility seems more plausible, as we have not identified any mechanism in these SSM architectures that would enable counting modulo some constant—a capability required for recognising parity and other languages outside of $\text{AC}^0$. \cite{sarrofExpressiveCapacityState2024a} even proved that diagonal SSM cannot express parity, but only for a specific choice of non-linear layer output functions. If this is indeed the case, it would align SSM with unique hard-attention transformers which have also been shown to be limited to $\text{AC}^0$, cf.\ \cite{haoFormalLanguageRecognition2022d}. 

Moreover, investigating potential expressiveness hierarchies based on formula nesting depth could provide finer-grained complexity classifications. Similar expressiveness hierarchies have been proved for unique hard-attention, cf.\ \cite{yang2024masked}, and soft-attention transformers, cf.\ \cite{yang2025kneedeepcrasptransformerdepth}.

\bibliography{references_iclr26}
\bibliographystyle{iclr2026_conference}

\appendix

\section{Appendix}

\subsection{Expressive Power of $\pLTLf$ and its extensions/fragments}
\label{app:logic}
Kamp's theorem \cite{kampTenseLogicTheory1968}, in conjunction with results from \cite{degiacomoLinearTemporalLogic2013,degiacomoPurepastLinearTemporal2021}, establishes that $\pLTLf$ is equally expressive to first-order logic over words $\text{FO}[<]$, which in turn corresponds precisely to the class of star-free expressions. Similarly, the logic $\mLTLf$ is expressively equivalent to first-order logic with modular predicates $\text{FO}[<, \MOD]$, characterising exactly the regular languages contained in $\text{AC}^0$ \cite{straubingFiniteAutomataFormal1994}. The unary fragment $\uLTLf$ constitutes a subset of languages definable in the two-variable fragment of first-order logic $\text{FO}^2[<]$ \cite{ETESSAMI2002279}. While $\uLTLf$ is restricted to the unary temporal operators \textit{yesterday} and \textit{previously}, $\text{FO}^2[<]$ fully characterises languages definable in $\LTLf$ using the unary operators \textit{yesterday}, \textit{previously}, \textit{next}, and \textit{sometimes in the future}. One can show that $\uLTLf$ must be a proper subset of $\text{FO}^2[<]$ as the standard construction for separating temporal formulas into past and future components \cite{fisherHandbookTemporalReasoning2005} relies on the \textit{since} operator. This relationship extends to the modular predicate extensions, with $\uLTLf[\MOD]$ forming a proper subset of $\text{FO}^2[<, \MOD]$. Regarding the counting extension to $\pLTLf$, we observe that restricting to counting subformulas with exclusively positive weights does not enhance expressiveness. However, incorporating negative weights enables the representation of non-regular languages that may lie outside $\text{AC}^0$. For instance, consider the language $\{w\in\{a,b\}^* \mid |w|_a = |w|_b\}$ consisting of strings with equal occurrences of $a$ and $b$. This language falls outside $\text{AC}^0$ \cite{furstParityCircuitsPolynomialtime1984}, yet can be concisely expressed even in $\uLTLf[\backhash]$ with the straightforward formula $\backhash a - \backhash b = 0$.

\subsection{Proofs of Section \ref{sec:diagonal}}
\label{app:diag-ltl}

\subparagraph{Feedforward Neural Networks.}
An \emph{(FNN-)node} is a function $v\colon\reals^k\to\reals$ with $v(\bs{x}) 
= \relu(\sum_{i=1}^{k} c_i x_i + b)$, where $k$ is the \emph{input dimension}, the $c_i \in \reals$ are called \emph{weights}, 
$b \in \reals$ is the \emph{bias} and $\relu: \reals \to \reals$ with $\relu(x) = \max(0,x)$ is the \emph{activation function} of $v$.
An \emph{(FNN-)layer} $l$ is a tuple of some $n$ nodes $(v_1, \dotsc, v_n)$ where each node has the same input dimension $m$. 
It computes the function $l\colon\reals^m\to\reals^n$ via $l(\bs{x}) = (v_1(\bs{x}), \dotsc, v_n(\bs{x}))$. We call 
$m$ the \emph{input} and $n$ the \emph{output dimension} of $l$.
A \emph{Feedforward Neural Network (FNN)} $N$ consists of $k$ layers $l_1, \dotsc, l_k$, where $l_1$ has input dimension $m$, the output dimension of $l_i$ is equal to the input 
dimension of $l_{i+1}$ for $i < k$ and the output dimension of $l_k$ is $n$. The FNN $N$ computes a function from $\reals^m$ to $\reals^n$ by $N(\bs{x}) = l_k(l_{k-1}( \ldots l_1(\bs{x}) \ldots ))$.

For lower bound constructions we sometimes need to make use of FNN checking specific properties, for instance: is some 
vector entry equal to a specific value? For easier notation we will define FNN \textit{gadgets} for specific properties we 
will need later.
\begin{lemma}
    \label{lem:fnn_gadgets}
    Let $n,n_1, \cdots, n_k, m, b\in\ints$. There are FNN
    \begin{itemize}
        \item $N_{\sim b}$ s.t.\ $N_{\sim b}(n)=1$ if $n\sim b$ and $N_{\sim b}(n)=0$ otherwise, for all ${\sim} \in\{<, \leq, =, \geq, >\}$.
        \item $N_{\min(0,x)}$ s.t.\ $N_{\min(0,x)}(n)=\min(0,x)$.
    \end{itemize}
\end{lemma}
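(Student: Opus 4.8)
Every function in the statement is piecewise linear, so the strategy is simply to assemble each one from affine maps and $\relu$ nodes. Two reusable building blocks suffice. First, any (possibly negative) value $z$ can be passed through a $\relu$-layer without loss, since $z = \relu(z) - \relu(-z)$; this lets us route signed quantities between layers and re-linearise afterwards. Second, define the ``clamp'' $\mathrm{clamp}(z) := \relu\bigl(\relu(z) - \relu(z-1)\bigr)$, which equals $\min(1,\max(0,z))$; in particular $\mathrm{clamp}(z)=0$ for $z\le 0$ and $\mathrm{clamp}(z)=1$ for $z\ge 1$, and it is a genuine two-hidden-layer FNN (the outer $\relu$ is harmless because the value is already nonnegative). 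The one real point of the lemma is that the comparison gadgets must output \emph{exactly} $0$ or $1$, so that the Boolean combinations ($\land,\lor,\neg$) used elsewhere in the paper remain sound; this is exactly where integrality of the inputs is used. Any gadget that is asked to compare a linear combination $\sum_i c_i n_i$ of integer inputs reduces to the single-variable case, since forming that affine combination is free inside an FNN.

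For the comparisons, set $N_{\ge b}(n) := \mathrm{clamp}(n - b + 1)$. For integers, $n \ge b$ gives $n-b+1 \ge 1$ hence output $1$, while $n \le b-1$ gives $n-b+1 \le 0$ hence output $0$; the ramp is never evaluated on its sloped part because consecutive integers are $1$ apart. The remaining relations reduce to this one: $N_{>b} := N_{\ge b+1}$, $N_{\le b}(n) := \mathrm{clamp}(b - n + 1)$, $N_{<b} := N_{\le b-1}$, and finally $N_{=b}(n) := \relu\bigl(N_{\ge b}(n) + N_{\le b}(n) - 1\bigr)$, which returns $1$ precisely when both inequality gadgets fire (i.e.\ $n = b$) and $0$ otherwise, using that each of them already returns a value in $\{0,1\}$. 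Each $N_{\sim b}$ is a finite composition of affine maps and $\relu$-layers, hence an FNN.

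For $\min(0,x)$ we use the identity $\min(0,x) = -\relu(-x)$: one $\relu$ node followed by multiplication by $-1$. If one insists that even the output node carry a $\relu$ — so the network cannot emit a non-positive value — one instead has the gadget produce $\relu(-x) = -\min(0,x)$ and folds the final sign flip into the affine part of the consuming component (the SSM increment function or FNN layer that uses the gadget); this is the only place where the precise convention for the output layer matters. There is essentially no hard step here: the whole content is the observation that on integer inputs the step function $[\,n \ge b\,]$ coincides with the continuous ramp $\mathrm{clamp}(n-b+1)$. One should also check that under saturated fixed-precision arithmetic the identities still hold, which they do, since all intermediate quantities stay within the representable integer range for the inputs of interest and all operations used are monotone, so no rounding or overflow can change the output.
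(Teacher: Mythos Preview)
Your proposal is correct and essentially coincides with the paper's proof: your $\mathrm{clamp}$ gadget, once expanded, is literally the paper's $N_{\ge b}(x)=\relu(\relu(x-(b-1))-\relu(x-b))$ and $N_{\le b}(x)=\relu(\relu(b+1-x)-\relu(b-x))$, and both reduce $<,>$ to shifted $\le,\ge$. The only minor differences are that the paper builds $N_{=b}$ directly as the shallower $\relu(\relu(x-(b-1))-2\,\relu(x-b))$ rather than composing the two inequality gadgets, and that the paper (silently fixing a typo in the statement) actually supplies $N_{\min(1,x)}(x)=\relu(x)-\relu(-x)-\relu(x-1)$, which is what is later used, rather than $\min(0,x)$.
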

\begin{proof}
    Let $N_{=b}$ be the FNN computing $N_{=b}(x)=\relu\left(\relu(x-(b-1))-2\cdot \relu(x-b)\right)$,
        $N_{\leq b}$ be the FNN computing $N_{\leq b}(x)=\relu\left(\relu(b+1-x)-\relu(b-x)\right)$,
        $N_{\geq b}$ be the FNN computing $N_{\geq b}(x)=\relu\left(\relu(x-(b-1))-\relu(x-b)\right)$ and
        $N_{\min(1,x)}$ be the FNN computing $N_{\min(1,x)}(x)=\relu(x)-\relu(-x)-\relu(x-1)$. Since we only work 
        over integers, we can rewrite $x<b$ to $x\leq b-1$ and $x>b$ to $x\geq b+1$.
        It is straightforward to see that the given FNN compute the required functions.
\end{proof}

We will sometimes need to combine FNN. Let $N_1$ and $N_2$ be FNN computing functions $\reals^{m_i} \rightarrow \reals^{n_i}$ for $i\in\{1,2\}$. The composition of $N_1$ and $N_2$ is defined as follows. When $n_2=m_1$, the FNN $N_1 \circ N_2$ computes the function $\reals^{m_2} \rightarrow \reals^{n_1}$ with $N_1 \circ N_2(\bs{x})= N_1(N_2(\bs{x}))$. Syntactically, composition simply concatenates the output layer of $N_2$ and the input layer of $N_1$ setting all weights to $1$.

To simplify notation, we define the gate and increment functions of diagonal SSM layers using linear projections. For fixed matrices $A, B$, let $\gate(\bs{x}_t)=\text{diag}(A\bs{x}_t)$ and $\inc(\bs{x}_t) = B\bs{x}_t$. This reduces the diagonal SSM layer recurrence to
\begin{align}
    \bs{h}_t = (A \bs{x}_t) \circ \bs{h}_{t-1} + B\bs{x}_t, \label{eq:rec_lin}
\end{align}
where $\circ$ denotes element-wise multiplication. We leverage the sequential decomposition of $\pLTLf$ formulas to construct an SSM evaluating subformulas layer-wise according to increasing nesting depth. 

\begin{definition}[Sequential Decomposition]
    Let $\varphi$ be an $\LTLf$ formula with $\nd(\varphi)=n$. The \textit{sequential decomposition} of $\varphi$ is the unique sequence of sets $M_0, \ldots, M_n$ with $M_i = \{\psi \in \Sub(\varphi) \mid \nd(\psi)=i\}$.
\end{definition}

First, we construct the translation for each temporal operator individually, then provide the complete construction by induction over nesting depth. Given a $\cLTLf$ formula $\varphi$ over propositions $\mathcal{P}=\{a_1, \dots, a_k\}$, we fix an injective function $\iota: \mathcal{P}\cup\Sub(\varphi)\cup \{1\} \rightarrow \{1, \dots, |\mathcal{P}|+|\varphi|+1\}$, assigning each proposition $p=a_i$ to index $i$. The assignment for remaining subformulas is arbitrary. This ordering assigns each subformula a fixed index within vectors computed by the resulting SSM, which has dimension $d=|\mathcal{P}|+|\varphi|+1$. The final dimension in each vector is set to the constant $1$. For ease of defining linear projections, we introduce the \textit{copy matrix} $C^{(i \gets j)} \in\mathbb{R}^{d\times d}$, defined by $C^{(i\gets j)} = \bs{e}_i\bs{e}_j^T$, for indices $1\leq i,j\leq d$. Multiplying a vector $\bs{x}\in\reals^d$ by $C^{(i\gets j)}$ yields a vector whose $j$-th component is $x_i$, with zeros elsewhere. 

Theorem \ref{thm:diag-ltl} and Theorem \ref{thm:diag-ltl-back} are direct consequences of the following more general result. Which we will prove at the end of this section.

\begin{theorem}\label{thm:cltl_diagonal}
Let $\varphi$ be a $\cLTLf$ formula over propositions $\mathcal{P}$. Then there exists an SSM $\mathcal{S}_\varphi$ over alphabet $\Sigma = 2^{\mathcal{P}}$ of dimension $d=|\mathcal{P}|+|\varphi|+1$ with exactly $\nd(\varphi)$ layers, satisfying the following: For every word $\sigma\in\Sigma^*$ with $|\sigma|=n$, if $\bs{y}_1\cdots\bs{y}_n$ is the output produced by $\mathcal{S}_\varphi$ after the final layer on input $\sigma$, then $(\bs{y}_t)_{\iota(\psi)}=1$ if and only if $\sigma, t \models \psi$, and $(\bs{y}_t)_{\iota(\psi)}=0$ otherwise, for all positions $1\leq t\leq n$ and subformulas $\psi\in\Sub(\varphi)$.
\end{theorem}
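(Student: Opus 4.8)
The plan is to induct on the nesting depth $n=\nd(\varphi)$. Writing $M_0,\dots,M_n$ for the sequential decomposition of $\varphi$, I would build $\mathcal S_\varphi$ layer by layer so that after layer $j$ the following invariant holds: at every position $t$, the $\iota(\psi)$-coordinate of the layer output equals $[\sigma,t\models\psi]$ for every $\psi\in\Sub(\varphi)$ with $\nd(\psi)\le j$, and coordinate $d$ (reserved for the constant) is $1$. The embedding is fixed to $\emb(\tau)=$ the $0/1$ vector with $(\emb(\tau))_{\iota(p)}=1$ iff $p\in\tau$, $(\emb(\tau))_d=1$, and all other entries $0$; since $M_0$ is exactly the set of propositions occurring in $\varphi$, this establishes the invariant at ``layer $0$''. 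For the inductive step I design layer $j$ to leave the depth-$<j$ coordinates and the constant untouched and to write the right bit into $\iota(\psi)$ for each $\psi\in M_j$. Two structural facts make this possible: in a diagonal layer the gate acts coordinate-wise, so the scalar recurrences used for the different $\psi\in M_j$ are independent and may be designed one at a time; and because $\phi$ receives the layer's own input $\bs x_t$ as a residual connection, every depth-$<j$ truth value is available to $\phi$ both for copying forward and for forming Boolean combinations.

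I would use the linear form $\bs h_t=(A\bs x_t)\circ\bs h_{t-1}+B\bs x_t$ of the diagonal recurrence~(\ref{eq:rec_lin}) with $\bs h_0=\vzero$, using the constant coordinate $d$ of $\bs x_t$ to realise constant gate/increment entries and the gadgets of Lemma~\ref{lem:fnn_gadgets} inside $\phi$. For a Boolean subformula $\psi=\neg\chi$ or $\psi=\chi_1\wedge\chi_2$ the coordinate $\iota(\psi)$ of $\bs h$ is held at $0$ and $\phi$ computes $(\bs z_t)_{\iota(\psi)}$ from $\bs x_t$ directly, e.g.\ as $1-(\bs x_t)_{\iota(\chi)}$ or $\relu\!\big((\bs x_t)_{\iota(\chi_1)}+(\bs x_t)_{\iota(\chi_2)}-1\big)$. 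For $\psi=\previously\chi$ I take gate entry $1$ and increment $(\bs x_t)_{\iota(\chi)}$, so coordinate $\iota(\psi)$ carries the running count of $\chi$-positions and $\phi$ outputs $\min(1,\cdot)$ of it. For $\psi=\chi_1\since\chi_2$ I use the input-dependent gate entry $(\bs x_t)_{\iota(\chi_1)}\in\{0,1\}$ and increment $(\bs x_t)_{\iota(\chi_2)}$; an induction on $t$ based on $\varphi\since\psi\equiv\psi\vee(\varphi\wedge\yesterday(\varphi\since\psi))$ shows coordinate $\iota(\psi)$ is a non-negative integer, equal to $0$ when $\sigma,t\not\models\chi_1\since\chi_2$ and $\ge 1$ otherwise, and $\phi$ thresholds it. Both $\previously$ and $\since$ are integer-valued and can only saturate \emph{upward} past the relevant threshold, so these gadgets are already correct over fixed precision; note $\since$ is the only operator needing a non-constant gate, which is exactly where diagonality rather than time-invariance is used.

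For $\psi=\yesterday\chi$ a single scalar linear recurrence cannot implement a genuine one-step delay, so instead I use a \emph{contracting} gate: constant entry $\tfrac{1}{3}$ and increment $\tfrac{2}{3}(\bs x_t)_{\iota(\chi)}$, yielding $(\bs h_t)_{\iota(\psi)}=\tfrac{2}{3}\sum_{s=1}^{t}(\tfrac{1}{3})^{t-s}(\bs x_s)_{\iota(\chi)}\in[0,1)$. A short calculation gives that $(\bs h_t)_{\iota(\psi)}-\tfrac{2}{3}(\bs x_t)_{\iota(\chi)}$ is $\ge\tfrac{2}{9}$ when $\sigma,t{-}1\models\chi$ and $<\tfrac{1}{9}$ otherwise (and it is $0$ at $t=1$, correctly making $\yesterday\chi$ false there), so $\phi$ — which also sees $(\bs x_t)_{\iota(\chi)}$ — reads off the value with a fixed threshold. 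For a counting subformula $\psi=\sum_\ell a_\ell\backhash\varphi_\ell\sim c$ I take gate entry $1$ and increment $\sum_\ell a_\ell(\bs x_t)_{\iota(\varphi_\ell)}$ (a signed quantity, which is harmless since only the gate must be non-negative), so coordinate $\iota(\psi)$ accumulates $\sum_\ell a_\ell\,|\{s\le t:\sigma,s\models\varphi_\ell\}|$, and $\phi$ applies the comparison gadget $N_{\sim c}$. Finally $\phi$ assembles the output of layer $j$: it copies coordinate $d$ and all depth-$<j$ coordinates from $\bs x_t$, installs the freshly computed bits in the depth-$j$ coordinates, and zeroes the rest — all piecewise linear with boundedly many pieces, hence a ReLU network. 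Since one layer handles an entire depth level, $\mathcal S_\varphi$ has exactly $\nd(\varphi)$ layers, and its dimension is exactly $|\mathcal P|+|\varphi|+1$ because each subformula reuses its own $\iota$-coordinate and no scratch space is ever needed; Theorems~\ref{thm:diag-ltl} and~\ref{thm:diag-ltl-back} then follow by appending an $\out$ that projects onto coordinate $\iota(\varphi)$.

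The step I expect to be the main obstacle is justifying the $\yesterday$-gadget over \emph{fixed} precision, because the interval separation above is a statement about exact arithmetic. The key observation is that a leaky integrator with gate $\le\tfrac{1}{2}$ forgets errors geometrically: a rounding error of magnitude $O(2^{-b})$ committed at step $s$ propagates to at most $(\tfrac{1}{3})^{t-s}O(2^{-b})$ at step $t$, so the total accumulated error stays $O(2^{-b})$ \emph{uniformly in the input length} and is below $\tfrac{1}{18}$ once $b$ is a suitable constant; since the fixed-precision arithmetic rounds monotonically, the order of the two families of values is preserved across the threshold and $\phi$ still extracts the correct bit. The remaining gadgets are unconditionally robust — Boolean operations are exact, and $\previously$ and $\since$ only ever saturate in the harmless direction — while the counting gadget is the single place where the accumulator can reach magnitude $\Theta(n)$, which is precisely why the construction yields Theorem~\ref{thm:diag-ltl} for fixed precision when $\varphi$ has no counting subformula but needs $\log$-precision (Theorem~\ref{thm:diag-ltl-back}) in general.
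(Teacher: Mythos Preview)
Your proposal is correct and follows essentially the same approach as the paper: induction on nesting depth via the sequential decomposition, one diagonal layer per depth level, with the same per-operator gadgets (Boolean combinations via $\phi$, $\previously$ and counting as accumulators with gate $1$, $\since$ via the input-dependent gate entry $(\bs x_t)_{\iota(\chi_1)}$). The only noteworthy variation is your $\yesterday$-gadget: the paper uses gate $\tfrac{1}{4}$ with increment $a_t$ and has $\phi$ discriminate among the four disjoint intervals determined by $(a_t,a_{t-1})$ (Lemma~\ref{lem:prev_encoding}), whereas you use gate $\tfrac{1}{3}$ and exploit the residual input to $\phi$ to subtract off the current bit before thresholding --- a cosmetic difference, since both are contracting leaky integrators with a constant-gap threshold that survives fixed precision.
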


To prove this theorem, we first require some intermediate results. Specifically, evaluating formulas involving the \textit{yesterday}-operator necessitates a suitable encoding of prior-step information into the current hidden state via the gate mechanism. This encoding was originally introduced by \cite{sarrofExpressiveCapacityState2024a}; we provide a self-contained proof adapted to our context.

\begin{lemma}[\cite{sarrofExpressiveCapacityState2024a}]
    \label{lem:prev_encoding}
    Let $\sigma$ be a binary sequence $\sigma = a_1 a_2 \cdots a_n \in \{0,1\}^*$. For every $1\leq t \leq n$ we define the number $h_t\in\reals$ recursively by $h_0=0$ and $h_t=\frac{1}{4}h_{t-1} + a_t$. There exists an FNN $N_\text{prev}$ which outputs the value $a_{t-1}$ at position $t$ for all $t \geq 2$, and outputs $0$ at position $1$.
\end{lemma}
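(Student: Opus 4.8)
The plan is to exploit the fact that the recurrence $h_t = \tfrac14 h_{t-1} + a_t$ is a base-$4$ positional encoding of the bit string. Unrolling gives the closed form $h_t = \sum_{i=1}^{t} a_i\, 4^{\,i-t}$, so $0 \le h_t \le \sum_{k=0}^{t-1} 4^{-k} = \tfrac43(1-4^{-t}) < \tfrac43$. The crucial observation is that the ``leading digit'' $a_t$ is recoverable from $h_t$ by a single threshold: if $a_t=0$ then $h_t = \tfrac14 h_{t-1} < \tfrac14\cdot\tfrac43 = \tfrac13$, while if $a_t = 1$ then $h_t = 1 + \tfrac14 h_{t-1} \ge 1$. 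Hence $h_t$ never lands in the open interval $(\tfrac13,1)$, and $a_t = 1$ iff $h_t \ge 1$.

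First I would build a small ReLU ``rounding'' gadget $\theta$, e.g. $\theta(x) = \relu\!\big(\tfrac{3x-1}{2}\big) - \relu\!\big(\tfrac{3x-3}{2}\big)$, which maps every $x \in [0,\tfrac13]$ to $0$ and every $x \in [1,\tfrac43)$ to $1$ (its behaviour on the gap $(\tfrac13,1)$ is irrelevant, since no $h_t$ lies there). By the previous paragraph, $\theta(h_t) = a_t$ for all $t \ge 1$. Next, the recurrence itself is invertible in one step: from $h_t = a_t + \tfrac14 h_{t-1}$ we get $h_{t-1} = 4(h_t - a_t) = 4h_t - 4\theta(h_t)$, and this quantity equals $4\cdot\tfrac14 h_{t-1} \ge 0$, hence is nonnegative, so a $\relu$ applied to it acts as the identity. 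Applying $\theta$ once more yields $a_{t-1}$: define $N_{\text{prev}}(x) := \theta\!\big(4x - 4\,\theta(x)\big)$, so that $N_{\text{prev}}(h_t) = \theta(h_{t-1}) = a_{t-1}$ for $t \ge 2$.

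It remains to handle $t = 1$ and to check this is genuinely an FNN. For $t=1$ we have $h_1 = a_1 \in \{0,1\}$, and in either case $4h_1 - 4\theta(h_1) = 4a_1 - 4a_1 = 0$, so $N_{\text{prev}}(h_1) = \theta(0) = 0$, as required. For the implementation: $\theta$ is a one-hidden-layer ReLU network; computing $4x - 4\theta(x)$ requires carrying the input $x$ past the hidden layer of $\theta$, which is done by a copy node using $\relu(x)=x$ (valid since $x = h_t \ge 0$), followed by an affine combination whose outer $\relu$ absorbs the map $4x-4\theta(x)$ because that value is nonnegative; then $\theta$ is applied once more. Composing these layers as in the composition of FNN gives the desired $N_{\text{prev}}$.

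The main obstacle to get right is precisely the ``no value in $(\tfrac13,1)$'' separation: it hinges on the strict geometric bound $h_{t-1} < \tfrac43$, which in turn uses the initial condition $h_0 = 0$ and $a_i \in \{0,1\}$. Everything else — the two applications of $\theta$, the one-step inversion of the recurrence, and the $t=1$ boundary case — is routine bookkeeping once this separation and the nonnegativity of all intermediate quantities are established.
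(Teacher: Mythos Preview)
Your proof is correct, but it takes a genuinely different route from the paper's argument. The paper proceeds by a direct case analysis on the pair $(a_t,a_{t-1})$: expanding $h_t = a_t + \tfrac14 a_{t-1} + \tfrac{1}{16}a_{t-2} + \cdots$ and bounding the tail by $\sum_{k\ge 2} 4^{-k} = \tfrac{1}{12}$, it shows that $h_t$ lies in one of four pairwise-disjoint intervals $[0,\tfrac{1}{12}]$, $[\tfrac14,\tfrac13]$, $[1,\tfrac{13}{12}]$, $[\tfrac54,\tfrac43]$ depending on $(a_t,a_{t-1})$, and then builds a single piecewise-linear ReLU map sending the first and third intervals to $0$ and the second and fourth to $1$. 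Your approach instead exploits the algebraic invertibility of the recurrence: you first threshold once (using only the coarser gap $(\tfrac13,1)$) to recover $a_t$, subtract it off and rescale to obtain $h_{t-1}$ exactly, and then threshold again to read off $a_{t-1}$.

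What each buys: the paper's four-interval analysis is a one-shot decoder and makes the separation margins completely explicit, which the paper leverages to argue robustness under fixed-precision arithmetic (it even tightens the intervals to dyadic endpoints for this purpose). Your two-step peeling is conceptually cleaner, reuses the same gadget $\theta$ twice, and would iterate immediately to extract $a_{t-k}$ for any fixed $k$; on the other hand, the intermediate inversion step $4h_t - 4\theta(h_t)$ is more sensitive to rounding, so the finite-precision discussion in the paper does not transfer as directly.
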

\begin{proof} 
The hidden state evolves according to $h_t = \frac{1}{4} h_{t-1} + a_t$ for $t \geq 1$. Expanding this recurrence, we obtain $h_t = a_t + \frac{1}{4}a_{t-1} + \frac{1}{16}a_{t-2} + \frac{1}{64}a_{t-3} + \cdots$. The key insight is that when $a_i \in \{0,1\}$, the value $h_t$ has the binary representation $\texttt{$a_t$.0$a_{t-1}$0$a_{t-2}$0$\cdots a_1$}_2$. This encoding places the current input $a_t$ in the integer part, while previous inputs $a_{t-1}, a_{t-2}, \ldots$ occupy alternating binary digits in the fractional part, separated by zeros.

To understand why this leads to disjoint intervals, observe that the contribution of all bits from position $t-2$ and earlier is bounded: the infinite series $\frac{1}{16}a_{t-2} + \frac{1}{64}a_{t-3} + \cdots$ can be at most $\frac{1}{16} + \frac{1}{64} + \frac{1}{256} + \cdots = \frac{1/16}{1-1/4} = \frac{1}{12}$ when all $a_i = 1$, and at least $0$ when all $a_i = 0$. Therefore, the contribution of the ``tail'' is bounded by the interval $[0, \frac{1}{12}]$.

Given this bound, we can analyze the four cases based on $(a_t, a_{t-1})$:
\begin{itemize}
\item For $(a_t, a_{t-1}) = (0, 0)$: We have $h_t = 0 + 0 + \text{tail}$, so $h_t \in [0, \frac{1}{12}]$.

\item For $(a_t, a_{t-1}) = (0, 1)$: We have $h_t = 0 + \frac{1}{4} + \text{tail}$, so $h_t \in [\frac{1}{4}, \frac{1}{4} + \frac{1}{12}] = [\frac{1}{4}, \frac{1}{3}]$.

\item For $(a_t, a_{t-1}) = (1, 0)$: We have $h_t = 1 + 0 + \text{tail}$, so $h_t \in [1, 1 + \frac{1}{12}] = [1, \frac{13}{12}]$.

\item For $(a_t, a_{t-1}) = (1, 1)$: We have $h_t = 1 + \frac{1}{4} + \text{tail}$, so $h_t \in [\frac{5}{4}, \frac{5}{4} + \frac{1}{12}] = [\frac{5}{4}, \frac{4}{3}]$.
\end{itemize}
These intervals are clearly disjoint.
For practical implementation under finite precision arithmetic, we can use slightly tighter intervals with exact binary representations. Since all boundaries can be represented exactly with at most 3 bits after the decimal point, we can safely use: $(0,0) \mapsto [0, \frac{1}{8}]$, $(0,1) \mapsto [\frac{1}{4}, \frac{1}{2}]$, $(1,0) \mapsto [1, \frac{9}{8}]$, and $(1,1) \mapsto [\frac{5}{4}, \frac{3}{2}]$. These refined intervals maintain the non-overlapping property while providing additional safety margins against rounding errors.

$N_\text{prev}$ extracts $a_{t-1}$ by mapping the intervals containing $(a_t, a_{t-1}) \in \{(0,0), (1,0)\}$ to output 0, and the intervals containing $(a_t, a_{t-1}) \in \{(0,1), (1,1)\}$ to output 1. This piecewise-linear function can be exactly represented by an FNN with ReLU activations using standard techniques for implementing step functions (\cite{aroraUnderstandingDeepNeural2018}). For $t = 1$, we have $h_1 = a_1$, and $N_\text{prev}(a_1) = 0$ by construction. For $t \geq 2$, the interval containing $h_t$ uniquely determines $a_{t-1}$, which is correctly extracted by $N_\text{prev}$. The factor $\frac{1}{4}$ (rather than $\frac{1}{2}$) is crucial as it provides sufficient spacing between encoded bits to prevent numerical errors under finite precision arithmetic.
\end{proof}

Having shown how we can recover information from previous positions in order to evaluate \textit{yesterday} formulas, we will now show how to construct layers evaluating $\pLTLf[\backhash]$ formulas in general.

\begin{lemma}
    \label{lem:ltl_layers}
    Let $\sigma=\sigma_1\cdots \sigma_n\in\Sigma^*$ be a word, $\varphi_1, \ldots ,\varphi_k$ be $\cLTLf$ formulas and $\bs{x}_1 \cdots \bs{x}_n$ be a sequence with the property $(\bs{x}_t)_{\iota(\varphi_i)}=1 \iff \sigma, t\models \varphi_i$ for all $1\leq t \leq n$ and $1\leq i\leq k$. Then there are diagonal gated SSM layers for all $\psi \in \{\neg \varphi_1, \varphi_1 \land \varphi_2, \varphi_1 \since \varphi_2, \yesterday \varphi_1, \previously \varphi_1, \sum_j a_j\cdot \backhash \varphi_j \sim c\}$ such that for the layer output $\bs{z}_1 \cdots \bs{z}_n$ we have $(\bs{z}_t)_{\iota(\psi)}=1 \iff \sigma, t\models \psi$ and $(\bs{z}_t)_{j}=(\bs{x}_t)_{j}$ for all $j\neq \iota(\psi)$ and $1\leq t \leq n$.
\end{lemma}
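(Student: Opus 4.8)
The plan is to build, for each of the six forms of $\psi$, a single diagonal layer that uses the coordinate $\iota(\psi)$ as a private working register while passing every other coordinate through untouched. Concretely, I would take the gate matrix $A$ so that the gate vector $A\bs{x}_t$ is $0$ in every coordinate $j\neq\iota(\psi)$, and the increment matrix $B$ so that its $j$-th row for $j\neq\iota(\psi)$ is the $j$-th standard basis row; then $(\bs{h}_t)_j = 0\cdot(\bs{h}_{t-1})_j + (\bs{x}_t)_j = (\bs{x}_t)_j$, so these coordinates are copied verbatim (in particular the constant coordinate $\iota(1)$ stays $1$). In coordinate $\iota(\psi)$ I put a gate entry $g_\psi(\bs{x}_t)$ and increment $\delta_\psi(\bs{x}_t)$, each affine in $\bs{x}_t$ (realised by the corresponding rows of $A$ and $B$, using the constant coordinate for additive terms), and set $(\bs{h}_0)_{\iota(\psi)}=0$. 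Finally the output FNN $\phi$, which also receives $\bs{x}_t$, copies $(\bs{h}_t)_j$ for $j\neq\iota(\psi)$ and writes into $\iota(\psi)$ a gadget value $\rho_\psi$ built from the FNN gadgets of Lemma~\ref{lem:fnn_gadgets} (or from $N_\text{prev}$ of Lemma~\ref{lem:prev_encoding}). This immediately gives $(\bs{z}_t)_j=(\bs{x}_t)_j$ for $j\neq\iota(\psi)$, so only the $\iota(\psi)$-entry needs to be argued case by case.

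The non-temporal and accumulative cases are then routine. For $\neg\varphi_1$ and $\varphi_1\land\varphi_2$ no memory is needed: take $g_\psi=\delta_\psi=0$ and let $\phi$ read $\bs{x}_t$ directly, outputting $\relu(1-(\bs{x}_t)_{\iota(\varphi_1)})$, resp.\ $\relu\big((\bs{x}_t)_{\iota(\varphi_1)}+(\bs{x}_t)_{\iota(\varphi_2)}-1\big)$, which equal the correct Boolean since the relevant inputs are $0/1$. For $\previously\varphi_1$ take $g_\psi\equiv 1$, $\delta_\psi(\bs{x}_t)=(\bs{x}_t)_{\iota(\varphi_1)}$, so the register holds $\#\{k\le t : \sigma,k\models\varphi_1\}$, and set $\rho_\psi=N_{\ge 1}$. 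For a counting subformula $\sum_j a_j\backhash\varphi_j\sim c$ take $g_\psi\equiv 1$ and the affine increment $\delta_\psi(\bs{x}_t)=\sum_j a_j(\bs{x}_t)_{\iota(\varphi_j)}$, so the register holds $\sum_j a_j\backhash\varphi_j$ evaluated at $t$, and set $\rho_\psi=N_{\sim c}$; this is the one layer whose register may reach magnitude $\Theta(n)$, hence the only place $\log$-precision is required, all other layers being correct already under fixed precision. For $\yesterday\varphi_1$ take $g_\psi\equiv\tfrac14$, $\delta_\psi(\bs{x}_t)=(\bs{x}_t)_{\iota(\varphi_1)}$: the register then evolves by exactly the recurrence of Lemma~\ref{lem:prev_encoding} with $a_t=(\bs{x}_t)_{\iota(\varphi_1)}$, so $\rho_\psi=N_\text{prev}$ returns $(\bs{x}_{t-1})_{\iota(\varphi_1)}$ for $t\ge 2$ and $0$ at $t=1$, matching $\sigma,t\models\yesterday\varphi_1$.

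The crux is the \emph{since}-layer. I would set $g_\psi(\bs{x}_t)=(\bs{x}_t)_{\iota(\varphi_1)}$ and $\delta_\psi(\bs{x}_t)=(\bs{x}_t)_{\iota(\varphi_2)}$; note $g_\psi$ is a coordinate projection, hence an affine \emph{non-negative} function of $\bs{x}_t$, as a diagonal gate must be. Writing $s_t=(\bs{h}_t)_{\iota(\psi)}$, the recurrence becomes $s_0=0$, $s_t=(\bs{x}_t)_{\iota(\varphi_1)}\,s_{t-1}+(\bs{x}_t)_{\iota(\varphi_2)}$. An induction on $t$ shows $s_t$ is a non-negative integer with $s_t\ge 1\iff\sigma,t\models\varphi_1\since\varphi_2$ and $s_t=0$ otherwise: if $\sigma,t\models\varphi_2$ then $s_t\ge 1$ and $k=t$ witnesses $\varphi_1\since\varphi_2$; if $\sigma,t\not\models\varphi_2$ but $\sigma,t\models\varphi_1$ then $s_t=s_{t-1}$ and, as $\varphi_1$ holds at $t$, $\sigma,t\models\varphi_1\since\varphi_2\iff\sigma,t-1\models\varphi_1\since\varphi_2$; and if $\sigma,t\not\models\varphi_1$ then $s_t=0$ and no witness $k\le t$ can exist. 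Then $\rho_\psi=N_{\ge 1}$ extracts the Boolean. Only the zero-versus-positive status of $s_t$ is ever used, and this status is preserved under saturated, monotone fixed-precision arithmetic even once $s_t$ grows large and saturates.

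I expect the main obstacle to be exactly this \emph{since}-layer. The unrolling $\varphi_1\since\varphi_2\equiv\varphi_2\lor(\varphi_1\land\yesterday(\varphi_1\since\varphi_2))$ wants the register carried when $\varphi_1$ holds and reset-to-true when $\varphi_2$ holds, i.e.\ a gate equal to the indicator of $\varphi_1\land\neg\varphi_2$; but a diagonal gate is a fixed affine function of $\bs{x}_t$ with non-negative entries, and a four-pattern case check on $\big((\bs{x}_t)_{\iota(\varphi_1)},(\bs{x}_t)_{\iota(\varphi_2)}\big)$ shows no such affine function exists. The fix — dropping the invariant that the register stays $0/1$-valued and instead tracking $\varphi_1\since\varphi_2$ through the register's \emph{positivity}, with the honest non-negative affine gate $(\bs{x}_t)_{\iota(\varphi_1)}$ and a final ReLU threshold — is the one non-obvious ingredient; the remainder (keeping every non-register coordinate bit-for-bit intact via zero-gating and identity increment rows, and assembling the small FNNs) is bookkeeping.
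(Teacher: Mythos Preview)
Your construction is correct and essentially identical to the paper's: the same gate/increment pairs are used for $\previously$, $\backhash$, $\yesterday$, and $\since$ (the paper extracts the since-register via $N_{\min(1,x)}$ rather than your $N_{\ge 1}$, and realises $\neg,\land$ through the $B$-matrix instead of in $\phi$, both purely cosmetic differences). One minor wording slip in your since-case analysis: the third case should also assume $\sigma,t\not\models\varphi_2$, since otherwise $s_t=1$ and $k=t$ is a witness.
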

\begin{proof}
    We will show how to construct the layers for each formula individually. For all layers we set $\bs{h}_0=\bs{0}$. When defining an FNN which only affects a single dimension in the input, we always assume that all other inputs remain unchanged.

    \textbf{Case 1: $\neg \varphi_1$.} Notice that in order to evaluate the negation, we do not need to transfer any information from previous positions via the gate. The formula can be evaluated statically at each position using the increment function. We define 
    \[
    A=0 \text{ and } B=I+C^{(\iota(\neg \varphi) \gets \iota(1))}-C^{(\iota(\neg \varphi_1) \gets \iota(\varphi_i))}\ .
    \]
    After computation of the linear recurrence as defined in Eq. \ref{eq:rec_lin} we get a sequence $\bs{h}_1 \cdots \bs{h_n}$ with $(\bs{h}_t)_{\iota(\neg \varphi_1)} = 1 - (\bs{x}_t)_{\iota(\varphi_1)}$. As this already satisfies the claim, we can set the non-linear activation $\phi$ of this layer to be the identity $\phi(\bs{x}, \bs{h}) = \bs{h}$.

    \textbf{Case 2: $\varphi_1 \land \varphi_2$.} Let
    \[
    A=0 \text{ and } B=I+C^{(\iota(\varphi_1 \land \varphi_2) \gets \iota(\varphi_1))}+C^{(\iota(\varphi_1 \land \varphi_2) \gets \iota(\varphi_2))} - C^{(\iota(\varphi_1 \land \varphi_2) \gets \iota(1))}
    \]
    After computation of the linear recurrence as defined in Eq. \ref{eq:rec_lin} we get a sequence $\bs{h}_1 \cdots \bs{h_n}$ with $(\bs{h}_t)_{\iota(\varphi_1\land \varphi_2)} = (\bs{x}_t)_{\iota(\varphi_1)} + (\bs{x}_t)_{\iota(\varphi_2)} - 1 \in \{-1, 0, 1\}$. The non-linear layer output $\phi$ now maps this sequence to $\bs{z}_1\cdots \bs{z}_n$, such that $(\bs{z}_t)_{\iota(\varphi_ 1\land \varphi_2)}=N_{\geq 1}((\bs{h}_t)_{\iota(\varphi_ 1\land \varphi_2)})$. This yields the wanted property.

    \textbf{Case 3: $\previously\varphi_1$.} In this case we need to use a gate in order to transfer information about previous positions at which $\varphi_1$ is satisfied at, to the current position.
    \[
        A = C^{\iota(\previously\varphi_1) \gets \iota(1)} \text{ and } B=I + C^{(\iota(\previously \varphi_1) \gets \iota(\varphi_1))} 
    \]
    Notice that the gate is actually time-invariant and diagonal, as the transfer does not depend on the current input. The gate just propagates the previous value, hence 
    \[
    (\bs{h}_t)_{\iota(\previously \varphi_1)} = (\bs{h}_{t-1})_{\iota(\previously \varphi_1)} + (\bs{x}_t)_{\iota(\varphi_1)} = \sum_{i=1}^t (\bs{x}_i)_{\iota(\varphi_1)}\ .
    \]
    Therefore, $(\bs{h}_{t-1})_{\iota(\previously \varphi_1)}$ holds the number of times $\previously \varphi_1$ was true at previous positions. The non-linear layer output $\phi$ now maps this sequence to $\bs{z}_1\cdots \bs{z}_n$, such that $(\bs{z}_t)_{\iota(\previously\varphi_ 1)}=N_{\geq 1}((\bs{h}_t)_{\iota(\previously\varphi_1)})$.

    \textbf{Case 4: $\sum_j a_j \cdot \backhash \varphi_j \sim c$.} For a more compact notation we denote the formula by $\psi$. This case is basically an extension of the \textit{previously} case. Instead of only testing if the number of positions at which a formula held is at least one, we compute a weighted sum on the number of occurences and compare it to a specific value.
    \[
        A= C^{(\iota(\psi) \gets \iota(1))} \text{ and } B=I + \sum_{j=1}^k a_j \cdot C^{(\iota(\psi) \gets \iota(\varphi_j))}
    \]
    As in the previous case this gate is also time-invariant and diagonal as it only propagates the value of a specific dimension independently of $\bs{x}_t$. We get 
    \[
    (\bs{h}_t)_{\iota(\psi)} = (\bs{h}_{t-1})_{\iota(\psi)} + \sum_{j=1}^k a_j \cdot(\bs{x}_t)_{\iota(\varphi_j)} = \sum_{i=1}^t \sum_{j=1}^k a_j \cdot(\bs{x}_i)_{\iota(\varphi_j)} = \sum_{j=1}^k a_j \cdot \underbrace{\sum_{i=1}^t (\bs{x}_i)_{\iota(\varphi_j)}}_{\backhash \varphi_j}\ .
    \]
    The non-linear layer output $\phi$ now maps this sequence to $\bs{z}_1\cdots \bs{z}_n$, such that $(\bs{z}_t)_{\iota(\psi)}=N_{\sim c}((\bs{h}_t)_{\iota(\psi)})$.

    \textbf{Case 5: $\varphi_1 \since \varphi_2$.} We use the temporal unfolding of the \textit{since}-operator: $\varphi_1 \since \varphi_2 \equiv (\varphi_1 \land \yesterday(\varphi_1 \since \varphi_2)) \lor \varphi_2$. At closer look one can see that the unfolding has a similar structure as the linear recurrence in SSM layers. The current evaluation of $\varphi_1$ gets multiplied with the previous evaluation of $\varphi_1 \since \varphi_2$ and additionally the evaluation of $\varphi_2$ is added in order to restart the recurrence in case of a new occurence of $\varphi_2$. We define 
    \[
    A=C^{(\iota(\varphi_1 \since \varphi_2) \gets \iota(\varphi_1))} \text{ and } B=I+C^{(\iota(\varphi_1 \since \varphi_2) \gets \iota(\varphi_2))}
    \]
    This leads to $(\bs{h}_t)_{\iota(\varphi_1 \since \varphi_2)} = (\bs{x}_t)_{\iota(\varphi_1)}\cdot(\bs{h}_{t-1})_{\iota(\varphi_1 \since \varphi_2)} + (\bs{x}_t)_{\iota(\varphi_2)}$.
    We see that in this case the gate is not time-invariant as it actually depends on the value of $(\bs{x}_t)_{\iota(\varphi_1)}$. Obviously $\varphi_1 \since \varphi_2$ is satisfied at some position $t$ if $(\bs{h}_t)_{\iota(\varphi_1 \since \varphi_2)} > 0$. The non-linear layer output $\phi$ now maps this sequence to $\bs{z}_1\cdots \bs{z}_n$, such that $(\bs{z}_t)_{\iota(\varphi_1 \since \varphi_2)}=N_{\min(1,x)}((\bs{h}_t)_{\iota(\varphi_1 \since \varphi_2)})$.

    \textbf{Case 6: $\yesterday \varphi_1$.} We use the encoding trick from Lemma \ref{lem:prev_encoding}. Let 
    \[
    A=\frac{1}{4}C^{(\iota(\yesterday \varphi_1) \gets \iota(1))} \text{ and } B=I+C^{(\iota(\yesterday \varphi_1) \gets \iota(\varphi_1))}
    \]
    We get $(\bs{h}_t)_{\iota(\yesterday \varphi_1)} = \frac{1}{4} (\bs{h}_{t-1})_{\iota(\yesterday \varphi_1)} + (\bs{x}_t)_{\iota(\varphi_1)}$. Let $N_{prev}$ be the FNN defined in Lemma $\ref{lem:prev_encoding}$. The non-linear layer output $\phi$ now maps this sequence to $\bs{z}_1\cdots \bs{z}_n$, such that $(\bs{z}_t)_{\iota(\yesterday \varphi_1)}=N_{\text{prev}}((\bs{h}_t)_{\iota(\yesterday \varphi_1)})$. This recovers the previous value of $\yesterday \varphi$ from position $t-1$ and yields the wanted property.
\end{proof}

We are now ready to prove Theorem \ref{thm:cltl_diagonal}. The proof shows that we do not need a single layer for each subformula of $\varphi$, but it is possible to evaluate all subformulas of the same nesting-depth simultaneously. We therefore need a decomposition of every $\pLTLf$ formula, by its nesting-depth. 

\begin{proof}[Proof of Thm. \ref{thm:cltl_diagonal}]
    Let $\varphi$ be a $\cLTLf$ formula with $\nd(\varphi)=k$ and $M_0, \cdots, M_k$ be its sequential decomposition. We show the claim via an induction over the sequential decomposition and nesting depth $k$ of $\varphi$.

    \textbf{Case $k=0$:} $M_0$ contains only atomic formulas. Let $\emb:\Sigma\rightarrow\reals^d$ with $\emb(\sigma)=\bs{x}_i$ with $(\bs{x}_i)_{\iota(1)}=1$ and $(\bs{x}_i)_{\iota(p)}=1$ if $p\in \sigma$ and otherwise $0$ for all $p\in \mathcal{P}$. All other dimensions are set to zero. Let $\sigma\in\Sigma^*$ be some word and $\bs{x}_1\cdots \bs{x}_n = \emb(\sigma)$, then obviously $(\bs{x}_t)_{\iota(p)} = 1 \iff \sigma, t \models p$ for all $p\in\mathcal{P}$ and $1\leq t \leq |\sigma|$.

    \textbf{Case $k>0$:} Using the induction hypothesis we assume that the claim holds for all formulas in $M_{k-1}$. Hence, we can apply Lemma $\ref{lem:ltl_layers}$ to each $\psi \in M_k$. Let $(A_\psi, B_\psi)$ the \emph{gate} and \emph{inc} tuples gained from Lemma \ref{lem:ltl_layers} for each $\psi\in M_k$. We now define layer $l_k$ with 
    $
        A = \sum_{\psi\in M_k} A_\psi \text{ and } B = \sum_{\psi \in M_k} B_\psi
    $.
    Recall that Lemma~\ref{lem:ltl_layers} shows that each $A_\psi$ and $B_\psi$ only affects the dimension of $\psi$, leaving all other dimensions unchanged. This enables us to simply accumulate the effect of all gates and \emph{inc}, resulting in a layer which simultaneously evaluates all $\psi \in M_k$. The same holds for the non-linear outputs. We argued that for each $\psi$, the layer output $\phi_\psi$ only affects the dimension of $\psi$. Hence, we can compose all FNN without  affecting the wanted output. Therefore, let $\phi = \bigcomp_{\psi \in M_k} \phi_\psi$, where $\bigcomp$ denotes the repeated composition of FNNs. Using these arguments together with Lemma~\ref{lem:ltl_layers} we get that for every word $\sigma \in\Sigma^*$, the output $\bs{z}^l_1\cdots \bs{z}^l_{|\sigma|}$ of $S_{\varphi}$ after layer $l_n$ has the property $(\bs{z}_t^l)_{\iota(\psi)} = 1 \iff w, t \models \psi$ for all $\psi\in M_n$ and $1\leq t \leq |\sigma|$.
\end{proof}

\begin{proof}[Proof of Thm. \ref{thm:diag-ltl}]
    Let $\sigma=\sigma_1\cdots \sigma_n\in\Sigma^*$ be a word and $\bs{z}_1\cdots \bs{z}_n$ be the vector sequence produced after the last layer of $S_\varphi$ as constructed in Theorem \ref{thm:cltl_diagonal}. We have $(\bs{z}_t)_{\iota(\psi)}=1 \iff w, t \models \psi$ for all $\psi \in \Sub(\varphi)$ and $1\leq t \leq n$. Therefore, $\sigma\in L(\varphi) \iff \sigma, n \models \varphi \iff (\bs{z}_n)_{\iota(\varphi)}=1$. We define the final output FNN of $S_\varphi$ such that $y_t = N_{=1}((\bs{z}_t)_{\iota(\varphi)})$. We get $\sigma\in L(\varphi) \iff S_\varphi(\sigma)=1$.
\end{proof}

It is important to note that the expressiveness of SSM evaluating $\cLTLf$ formulas with counting subformulas depends crucially on the arithmetic model used for computation. When operating over fixed-width arithmetic with a constant number of bits, SSM can only evaluate counting subformulas $\sum_{j=1}^k a_j \cdot \backhash \varphi_j \sim c$ in which all coefficients $a_j$ are positive (i.e., formulas in $\cLTLf^+$). This restriction arises because the accumulation can only increase when coefficients are solely positive, and the comparison against the threshold $c$ remains stable even under bounded-precision approximations. A decision procedure simply needs to determine whether the accumulated sum exceeds, equals, or falls below the constant $c$, which can be achieved with fixed precision. In contrast, when the arithmetic precision grows logarithmically with the input length, SSM can evaluate general counting formulas in $\cLTLf$, including those with negative coefficients $a_j$. This is the case because logarithmic precision provides enough bits to exactly count occurrences of subformulas in words of length $n$, maintaining the precise difference between positive and negative terms. For instance, recognising the language $\{a^nb^n \mid n \geq 0\}$ expressed by formula $\history(a \rightarrow \neg\yesterday b)\land(\backhash a - \backhash b = 0)$ requires tracking the exact difference between occurrences of $a$ and $b$, which necessitates unbounded precision as $n$ grows, since any fixed-precision arithmetic would eventually lead to overflow or rounding errors when processing sufficiently long inputs.

\subsection{Proofs of Section \ref{sec:time-invariant}}
\label{app:time-invariant}

For time-invariant SSM, the gate of each layer is a constant matrix, independent of the input. For the construction carried out here we assume, as in the previous section, that the increment function is simply computed by a linear projection. This means we can represent the recurrence of a time-invariant SSM layer as
\[
\bs{h}_t = A\bs{h}_{t-1} + B \bs{x}_t\ .
\]

For easier construction we show that we effectively only need to maintain one single counter, even when several modular predicates occur in a formula.

\begin{lemma}
For every $\varphi\in\uLTLf[\backhash, \MOD]$ using modular predicates with divisors $d_1, \ldots, d_k$ there is a formula $\varphi'\in\uLTLf[\backhash, \MOD]$, which only uses modular predicates with the same divisor $d'$.
\end{lemma}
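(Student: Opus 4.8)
The plan is to let $d'$ be the least common multiple of the occurring divisors and to replace every modular predicate in $\varphi$ by an equivalent Boolean combination of modular predicates all having divisor $d'$. Concretely, set $d' = \mathrm{lcm}(d_1, \dots, d_k)$ (note $d' \ge 2$ since every $d_i > 1$; if $\varphi$ contains no modular predicate at all there is nothing to prove). The single arithmetic fact needed is that, because $d_i \mid d'$, the residue of a position $t$ modulo $d_i$ is already determined by its residue modulo $d'$: for $0 \le r < d_i$, writing $R_{i,r} = \{\, s : 0 \le s < d',\ s \equiv r \pmod{d_i} \,\}$, we have $t \equiv r \pmod{d_i}$ if and only if $t \equiv s \pmod{d'}$ for some $s \in R_{i,r}$ — for the forward direction take $s = t \bmod d'$, and the backward direction is immediate from $d_i \mid d'$. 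Hence, for every word $\sigma$ and every position $t$,
\[
\sigma, t \models \MOD_r^{d_i} \quad\Longleftrightarrow\quad \sigma, t \models \bigvee_{s \in R_{i,r}} \MOD_s^{d'} .
\]

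Next I would define $\varphi'$ by structural recursion on $\varphi$: leave all atomic propositions, the propositional connectives, the unary temporal operators $\yesterday$ and $\previously$, and all counting subformulas $\sum_j a_j \backhash \psi_j \sim c$ syntactically unchanged (recursing into their arguments), and replace each atomic occurrence of a modular predicate $\MOD_r^{d_i}$ by the finite disjunction $\bigvee_{s \in R_{i,r}} \MOD_s^{d'}$. Since this rewriting introduces no new temporal or counting operators and turns every modular-predicate atom into a Boolean combination of modular predicates whose only divisor is $d'$, the resulting $\varphi'$ is again a $\uLTLf[\backhash, \MOD]$ formula that uses modular predicates only with the divisor $d'$.

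Finally I would prove $L(\varphi') = L(\varphi)$ by a routine induction on the structure of $\varphi$: the base case for modular predicates is exactly the displayed equivalence; atomic propositions and Boolean connectives are trivial; the unary temporal operators are compositional; and the truth value of a counting subformula at a position $t$ depends only on the sets $\{\, j' \le t : \sigma, j' \models \psi_j \,\}$, which are unchanged once each $\psi_j$ is replaced by the equivalent $\psi_j'$. I expect no genuine obstacle here; the only point requiring care is that the substitution must be carried out at \emph{every} occurrence of a modular predicate, including those nested inside $\yesterday$, $\previously$ and inside the scope of $\backhash$, which is legitimate because semantic equivalence is a congruence with respect to all operators of the logic. (The size of $\varphi'$ may grow with $d'$, but this is irrelevant since the statement only asserts existence.)
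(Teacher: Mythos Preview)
Your proposal is correct and is essentially the same argument as the paper's: take $d' = \mathrm{lcm}(d_1,\dots,d_k)$ and replace each $\MOD_r^{d_i}$ by the disjunction $\MOD_r^{d'} \lor \MOD_{r+d_i}^{d'} \lor \MOD_{r+2d_i}^{d'} \lor \cdots$, which is exactly your $\bigvee_{s \in R_{i,r}} \MOD_s^{d'}$. Your write-up is in fact more careful than the paper's terse two-line sketch, spelling out the structural induction and the congruence argument that the paper leaves implicit.
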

\begin{proof}
Take the least common multiple $d'=\text{lcm}(d_1, \cdots, d_k)$. Then construct $\varphi'$ by replacing each predicate $\MOD_r^{d_i}$ by $\MOD_r^{d'}\lor \MOD_{r+d_i}^{d'}\lor \MOD_{r+2d_i}^{d'}\lor \cdots$.
\end{proof}

We will now show that every language recognised by a $\uLTLf[\backhash, \MOD]$ formula can also be recognised by a time-invariant SSM. We will show this by extending the proof of Theorem \ref{thm:cltl_diagonal}. Let $\varphi\in\uLTLf[\backhash,\MOD]$ be a formula over modular predicates all having a single divisor $m$. The resulting SSM will have $d=|\mathcal{P}|+|\varphi|+m+1$ dimensions. We refine the function $\iota$ to a function $\iota: P \cup \{\MOD_r^m \mid 0\leq r < m\}\cup\Sub(\varphi) \cup\{1\} \rightarrow \{1, \cdots, d\}$ which has the same properties as in Section 3. The only difference is that we have additional dimensions for every possible modular predicate $\MOD_r^m$. For simplicity, we ensure that the indices of modular predicates are adjacent in memory, with $\iota(\MOD_r^m) = \iota(\MOD_0^m) + r$ for all $0 \leq r < m$, forming a contiguous interval of indices. We are then ready to prove that time-invariant SSM can recognise all languages definable in $\uLTLf[\backhash, \MOD]$.

\begin{theorem}
Let $\varphi$ be a $\uLTLf[\backhash, \MOD]$ formula over $\mathcal{P}$ using only modular predicates with divisor $m$. There is an SSM $\mathcal{S}_\varphi$ over the alphabet $\Sigma = 2^{\mathcal{P}}$ with dimension $d=|\mathcal{P}|+m+|\varphi|+1$ and $\nd(\varphi)+1$ layers, such that for all words $\sigma\in \Sigma^*$ with $|\sigma|=n$: If $\bs{y}_1\cdots \bs{y}_n$ is the output of $\mathcal{S}_\varphi$ after the last layer on input $\sigma$, then $(\bs{y}_t)_{\iota(\psi)}=1$ if and only if $w, t \models \psi$ and otherwise $0$, for all $1\leq t \leq n$ and $\psi \in \Sub(\varphi)$.    
\end{theorem}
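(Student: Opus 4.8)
The plan is to extend the layer-wise construction of Theorem~\ref{thm:cltl_diagonal} by prepending a single SSM layer that computes all modular predicates at once. Since the preceding lemma already lets us assume that every $\MOD_r^m$ occurring in $\varphi$ shares the single divisor $m$, one cyclic-permutation gate of order $m$ suffices. Concretely, I would take the first layer $l_{\MOD^m}$ to have a block-diagonal gate: the cyclic permutation matrix $P$ of Lemma~\ref{lem:mod} acting on the $m$ contiguous coordinates $\iota(\MOD_0^m),\dots,\iota(\MOD_{m-1}^m)$, and the zero matrix on every other coordinate; the increment matrix $B$ equal to the identity on the proposition coordinates and on $\iota(1)$ (so these are copied straight through from the embedding) and zero elsewhere; and the initial state $\bs{h}_0 = \bs{e}_{\iota(\MOD_0^m)}$. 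Because $\iota(\MOD_r^m)=\iota(\MOD_0^m)+r$ and $P^t\bs{e}_1=\bs{e}_{(t\bmod m)+1}$, the hidden state at position $t$ carries a $1$ exactly in coordinate $\iota(\MOD_{t\bmod m}^m)$, so a trivial output FNN $\phi(\bs{h},\bs{x})=\bs{h}$ already produces vectors $\bs{x}^1_t$ with $(\bs{x}^1_t)_{\iota(p)}=1\iff p\in\sigma_t$, $(\bs{x}^1_t)_{\iota(\MOD_r^m)}=1\iff t\equiv r\pmod m$, $(\bs{x}^1_t)_{\iota(1)}=1$, and zeros everywhere else. This is the base case of the induction: all nesting-depth-$0$ subformulas of $\varphi$ (atomic propositions \emph{and} modular predicates) are correctly decided after $l_{\MOD^m}$, and since $P\oplus 0$ is a fixed matrix, this layer is time-invariant.

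For the inductive step I would reuse Lemma~\ref{lem:ltl_layers}, applied along the sequential decomposition $M_0,\dots,M_{\nd(\varphi)}$ of $\varphi$: layers $l_2,\dots,l_{\nd(\varphi)+1}$ handle $M_1,\dots,M_{\nd(\varphi)}$, each assembled by summing the per-subformula gate/increment pairs $(A_\psi,B_\psi)$ over $\psi\in M_k$ and composing the per-subformula output FNNs, exactly as in the proof of Theorem~\ref{thm:cltl_diagonal}. The only cases that arise are $\neg$, $\land$, $\yesterday$, $\previously$, and the counting subformulas $\sum_j a_j\backhash\varphi_j\sim c$, since $\uLTLf$ has no \emph{since}-operator. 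Inspecting Lemma~\ref{lem:ltl_layers}, in each of these cases the gate $A_\psi$ is a constant matrix independent of $\bs{x}_t$ ($A_\psi=0$ for $\neg$ and $\land$; $A_\psi=\frac{1}{4}C^{(\cdots)}$ for $\yesterday$; $A_\psi=C^{(\cdots)}$ for $\previously$ and for counting) — only the \emph{since}-case needs an input-dependent gate. Hence $l_2,\dots,l_{\nd(\varphi)+1}$ are all time-invariant, so $\mathcal{S}_\varphi$ is a time-invariant SSM. Because every $A_\psi,B_\psi$ and every output FNN touches only coordinate $\iota(\psi)$, the proposition and modular coordinates fixed in $l_{\MOD^m}$ survive unchanged through all later layers, and the induction on $k$ goes through verbatim, giving $(\bs{y}_t)_{\iota(\psi)}=1\iff\sigma,t\models\psi$ for all $\psi\in\Sub(\varphi)$ and $1\le t\le n$. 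The dimension is $|\mathcal{P}|$ (propositions) $+\,m$ (modular coordinates) $+\,|\varphi|$ (subformulas) $+\,1$ (constant), and the layer count is $1$ (for $l_{\MOD^m}$) $+\,\nd(\varphi)$ (one per decomposition level), matching the statement.

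As in the diagonal setting, the only point where the arithmetic model intervenes is the counting case: under fixed-width saturated arithmetic one restricts to counting subformulas with non-negative coefficients, so that accumulation is monotone and the comparison with $c$ stays correct under rounding (yielding $\uLTLf[\MOD]$), whereas under $\log$-precision the $O(\log n)$ bits suffice to track exact signed counts on inputs of length $n$ (yielding the full $\uLTLf[\MOD,\backhash]$), exactly as discussed after Theorem~\ref{thm:diag-ltl-back}. Language acceptance then follows as in the proof of Theorem~\ref{thm:diag-ltl} by appending a final FNN that tests $(\bs{z}_n)_{\iota(\varphi)}=1$.

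I expect the main obstacle to be purely bookkeeping rather than conceptual: verifying the off-by-one in identifying $P^t\bs{e}_1$ with coordinate $\iota(\MOD_{t\bmod m}^m)$ — in particular that positions $t$ with $t\equiv 0\pmod m$ map to $\iota(\MOD_0^m)$, which relies on $P^m=I$ — and checking that the block-diagonal gate $P\oplus 0$ together with the increment $B$ and the chosen $\bs{h}_0$ genuinely makes the non-modular coordinates behave as a pure ``copy the current input'' recurrence while the modular block independently runs the permutation. Everything else is a direct transcription of the proofs of Lemmas~\ref{lem:ltl_layers} and~\ref{lem:mod} and of Theorem~\ref{thm:cltl_diagonal}.
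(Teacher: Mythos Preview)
Your proposal is correct and follows essentially the same approach as the paper: prepend a single layer whose gate is the cyclic permutation $P$ on the contiguous modular block and zero elsewhere, with an increment that copies the non-modular coordinates through, then reuse the layer-by-layer construction of Theorem~\ref{thm:cltl_diagonal} along the sequential decomposition. If anything, your write-up is more explicit than the paper's in two respects: you spell out that all the cases of Lemma~\ref{lem:ltl_layers} needed here ($\neg$, $\land$, $\yesterday$, $\previously$, counting) have constant gate matrices and hence yield time-invariant layers, and you flag the off-by-one bookkeeping between $P^t\bs{e}_1=\bs{e}_{(t\bmod m)+1}$ and the coordinate $\iota(\MOD_{t\bmod m}^m)$, which the paper glosses over.
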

\begin{proof}
    The proof follows the exact same lines as in Theorem \ref{thm:cltl_diagonal}. The only difference is that we need an additional layer right after the embbedding in order to evaluate the modular predicates. Let $\sigma\in\Sigma^*$ and $\bs{x}_1 \cdots \bs{x_n}$ be the vector sequence following the embedding. We add an additional layer $l_{\MOD^m}$ as described in Lemma \ref{lem:mod}. For this layer, we extend the permutation matrix $P$ from Lemma \ref{lem:mod} to the full dimension $d$ by embedding it in a larger matrix. Specifically, we construct a $d \times d$ matrix where the cyclic permutation $P$ is applied only to the dimensions $\iota(\MOD_r^m)$ for $0 \leq r < m$, which form a contiguous block of indices as established earlier. All other entries are set to zero. Let $A$ be this matrix. We set the increment $B$ to be the matrix which looks like the identity matrix, but with zeros on the diagonal for all dimensions $\iota(\MOD_r^m)$ for $0 \leq r < m$. Therefore we get $(\bs{h}_t)_{\iota(\MOD_r^m)}=1 \iff t \equiv r \pmod{d}$ and ${(\bs{h_t})_i = (\bs{x}_t)_i}$ for all other dimensions $i$. Therefore, all atomic formulas are evaluated after the first layer. The remaining layers are defined as in Theorem $\ref{thm:cltl_diagonal}$.
\end{proof}

Theorem \ref{thm:timeinvariant-ltl} now follows directly from the previous theorem.

It is important to note that the considerations regarding fixed-width arithmetic for time-invariant SSM mirror those discussed for diagonal SSM in Appendix \ref{app:diag-ltl}. When operating over fixed-width arithmetic with a constant number of bits, time-invariant SSM can only reliably evaluate counting subformulas with positive coefficients (i.e., formulas in $\uLTLf[\backhash, \MOD]^+$), as the accumulation only increases and comparisons against thresholds remain stable under bounded-precision approximations. The ability to evaluate formulas with negative coefficients requires arithmetic precision that grows logarithmically with input length. The addition of modular predicates does not affect this fundamental limitation, as the modular counter component operates independently from the counting mechanisms. Thus, when restricted to fixed-width arithmetic, the expressiveness of time-invariant SSM seems to remain limited to $\uLTLf[\MOD]$, which corresponds to a fragment of $\text{FO}^2[<, \MOD]$.

\subsection{Proofs of Section \ref{sec:comparison}}
\label{app:comparison}

C-RASP was first introduced by \cite{yang2024counting} as a logic called $K_t[\#]$. The syntax of C-RASP is very similar to $\uLTLf[\backhash]$, with the only difference being that C-RASP has no \textit{yesterday} and no \textit{previously} operator. The only temporal operator is the $\backhash$ operator in counting formulas.

We show that C-RASP is strictly less expressive than $\uLTLf[\backhash]$. The proof is based on the fact that while C-RASP can simulate the $\textit{previously}$-operator, it cannot simulate \textit{yesterday}.

\begin{lemma}
    $\text{C-RASP} \equiv \uLTLf[\backhash]$ without the \textit{yesterday}-operator.
\end{lemma}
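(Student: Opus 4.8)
The plan is to establish the stated expressive equivalence by two inclusions. The inclusion from left to right is immediate on the syntactic level: C-RASP is obtained from $\uLTLf[\backhash]$ by deleting both unary temporal operators $\yesterday$ and $\previously$ while retaining atomic propositions, Boolean connectives, and the counting construct $\sum_j a_j \backhash \varphi_j \sim c$. Hence every C-RASP formula is literally a $\uLTLf[\backhash]$ formula that does not use $\yesterday$, and the two semantics coincide clause by clause; so every language defined by a C-RASP formula is defined by a $\yesterday$-free $\uLTLf[\backhash]$ formula.

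For the converse I would give a structure-preserving translation $T$ from $\yesterday$-free $\uLTLf[\backhash]$-formulas into C-RASP, defined by induction on the formula. On atomic propositions, negation and conjunction, $T$ acts homomorphically ($T(p)=p$, $T(\neg\psi)=\neg T(\psi)$, $T(\psi_1\land\psi_2)=T(\psi_1)\land T(\psi_2)$); on a counting subformula it recurses into the arguments, $T\bigl(\sum_j a_j\backhash\psi_j\sim c\bigr)=\sum_j a_j\backhash T(\psi_j)\sim c$; and the only genuinely new clause is
\[
T(\previously\psi) \;:=\; \bigl(1\cdot\backhash T(\psi)\ge 1\bigr).
\]
Since $T$ never introduces $\previously$ or $\yesterday$, the output $T(\varphi)$ is a C-RASP formula.

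The correctness claim, proved by induction on $\varphi$, is that for every word $\sigma=\sigma_1\cdots\sigma_n$ and every position $1\le i\le n$ we have $\sigma,i\models\varphi$ iff $\sigma,i\models T(\varphi)$; taking $i=n$ then yields $L(\varphi)=L(T(\varphi))$, which completes the inclusion. All cases other than $\previously$ follow directly from the induction hypothesis applied to the immediate subformulas — in the counting case using that replacing each $\psi_j$ by the semantically equivalent $T(\psi_j)$ changes neither which positions are counted nor therefore the value of the weighted sum. For the $\previously$ case: $\sigma,i\models\previously\psi$ holds iff there is some $1\le k\le i$ with $\sigma,k\models\psi$, iff (by the induction hypothesis) there is some $1\le k\le i$ with $\sigma,k\models T(\psi)$, iff the count $\backhash T(\psi)$ evaluated at $i$ is at least $1$, iff $\sigma,i\models\bigl(\backhash T(\psi)\ge 1\bigr)=T(\previously\psi)$.

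There is no real obstacle here; the single point needing care is the soundness of the translation under nesting — that $\backhash$ applied to a translated subformula still counts exactly the intended positions — but this is precisely what the induction hypothesis delivers, and the homomorphic definition of $T$ on the remaining connectives makes the induction go through uniformly even when $\previously$ occurs inside a counting subformula or under negation.
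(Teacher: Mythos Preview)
Your proposal is correct and follows essentially the same approach as the paper: both directions rest on the syntactic containment of C-RASP in $\uLTLf[\backhash]$ and on replacing each $\previously\psi$ by the counting formula $\backhash\psi\ge 1$. The paper states this replacement in one sentence and calls the equivalence obvious, whereas you spell out the inductive translation $T$ and verify the $\previously$ clause explicitly; the added rigor is welcome but not a different idea.
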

\begin{proof}
    The inclusion of C-RASP in $\uLTLf[\backhash]$ is trivial, as all operators of C-RASP are also in $\uLTLf[\backhash]$. For the other direction, let $\varphi$ be a $\uLTLf[\backhash]$ formula without the \textit{yesterday}-operator. We only need to replace all \textit{previously} subformulas $\previously \psi$ by the counting formula $\backhash \psi \geq 1$. This yields a C-RASP formula which is obviously equivalent to $\varphi$.
\end{proof}

To see that C-RASP is strictly less expressive than $\uLTLf[\backhash]$, we consider the language $L = \{ w \in \{a,b\}^* \mid w \text{ contains } aa \} $, which is definable by the $\uLTLf[\backhash]$ formula $\previously (a \land \yesterday a)$, but as noticed by \cite{yang2024counting} cannot be defined in C-RASP.

\end{document}